\def\BibTeX{{\rm B\kern-.05em{\sc i\kern-.025em b}\kern-.08em
    T\kern-.1667em\lower.7ex\hbox{E}\kern-.125emX}}
\newtheorem{theorem}{Theorem}
\newtheorem{lemma}{Lemma}
\newtheorem{corollary}{Corollary}
\newtheorem{definition}{Definition}
\newtheorem{remark}{Remark}
\newtheorem{example}{Example}
\newenvironment{proof}{\ {\noindent\it Proof:}}{\hfill $\blacksquare$\par}
\begin{document}
\title{Controllability of Networked Sampled-data Systems}
\author{Zixuan Yang, Xiaofan Wang, and Lin Wang
\thanks{This work was supported by the National Natural Science Foundation of China
(No. 61873167, 61773255), and in part by the Strategic Priority Research Program of Chinese Academy of Sciences under Grant No. XDA27000000. \textit{(Corresponding author: Lin Wang.)}}
\thanks{Zixuan Yang and Lin Wang are with the Department of Automation, Shanghai Jiao Tong University, and Key Laboratory of System Control and Information Processing, Ministry of Education of China, Shanghai 200240, China (e-mail: jennifer\_yang@sjtu.edu.cn; wanglin@sjtu.edu.cn). }
\thanks{Xiaofan Wang is with Shanghai Key Laboratory of Power Station Automation
Technology, School of Mechatronic Engineering and Automation,
Shanghai University, Shanghai 200444, China (e-mail: xfwang@sjtu.edu.cn).}}
\maketitle
\begin{abstract}
The controllability of networked sampled-data systems with zero-order holders on the control and transmission channels is explored, where single- and multi-rate sampling patterns are considered, respectively. 
The effects of sampling on the controllability of networked systems are analyzed, with some necessary and/or sufficient controllability conditions derived.
Different from the sampling control of single systems, the pathological sampling of node systems could be eliminated by an appropriate design of network structure and inner couplings.
While for singular topology matrices, the pathological sampling of single nodes will cause the entire system to lose controllability. 
Moreover, any periodic sampling will not affect the controllability of networked systems with specific node dynamics. 
All the results indicate that whether a networked system is under pathological sampling or not is jointly determined by mutually coupled factors.
\end{abstract}

\begin{IEEEkeywords}
Network controllability, sampled-data systems, multi-rate sampling, pathological sampling, networked systems.
\end{IEEEkeywords}

\section{Introduction}
\label{sec:introduction}
Controllability, as a prerequisite of effective control of system performance and behavior\cite{xiang2019advances}, has been extensively investigated during the past decades with various rank conditions and graphic properties developed\cite{chen2012optimal,kalman1962canonical,davison1975new,gilbert1963controllability,kobayashi1978controllability,lin1974structural}.
Recently, the upsurge of network science has promoted the expansion of research on controllability.
The maximum matching theorem was applied to ensure structural controllability by driver nodes selection \cite{liu2011controllability}, and the relationship between the number of driver nodes and network structure properties was investigated \cite{menichetti2014network, posfai2013effect}.
The notion of target control was proposed in \cite{posfai2013effect}, and the control algorithm was optimized through preferential matching \cite{zhang2017efficient}.
Multiple descriptions of the node importance were presented \cite{liu2012control,jia2013emergence} and employed to analyze the robustness of network controllability \cite{pu2012robustness}.
By the output Gramian of infinite lattice systems, \cite{klickstein2020controllability} found the control energy is exponentially related to the maximum distance between driver and target nodes.
So far, the controllability of complex networks has been applied to real-world control problems in biology, medicine, and other fields \cite{yan2017network,guo2018novel}.\par

Due to the development of computer control platforms, and problems such as limited bandwidth, signal instability, and transmission delay, the control and transmission signals of real systems are usually sampled data.
Therefore, how to maintain the controllability of a system during sampling is a problem worthy of attention.
It was pointed out in \cite{bar1975preservation} that the controllability of linear time-invariant (LTI) systems may not be preserved during the periodic sampling.
In \cite{chen2012optimal}, the sampling that destroys controllability was defined as ``pathological sampling", and a non-pathological sampling condition related to eigenvalues of the state matrix was provided.
The non-pathological sampling of switched linear systems was studied in \cite{babaali2005nonpathological}.
The influence of sampling on controllability indices was analyzed in \cite{hagiwara1988controllability}.
In \cite{kreisselmeier1999sampling}, a method of adding a step of non-equidistant sampling to periodic sampling was proposed to maintain the controllability of LTI systems and is further applied to time-varying systems\cite{guo2005systems}.
The multi-rate sampling pattern where control channels have nonidentical sampling periods was studied in \cite{pasand2018controllability} with a sufficient controllability condition given.\par

The researches above are either for single network control systems with no sampling on transmission channels between nodes or under the assumption that each node system in the network is one-dimensional.
However, node states in real-world systems are often high-dimensional and coupled with each other through multiple transmission channels.
In \cite{zhou2015controllability}, some controllability conditions for general networked systems were presented based on the transfer function matrix.
An easier-to-verify necessary and sufficient controllability criterion was derived in \cite{hao2019new} by matrix similarity transformation. 
In \cite{wang2016controllability}, it was clarified that the controllability of a networked system is jointly determined by the coupling of network structure and node dynamics.
A controllability decomposition approach for networked systems was proposed in \cite{iudice2019node} to investigate each node system when the network is not completely controllable.
The controllability of heterogeneous networked systems was investigated in \cite{xiang2019controllability}.
The controllability conditions for networked systems with more complicated structures, such as deep-coupling networks \cite{wu2020controllability} and Kronecker product networks \cite{hao2019controllability} were also derived.\par

Though there have been easier-to-verify controllability conditions for networked continuous-time LTI (CLTI) systems, the sampling effects on them has not attracted much attention.
Some results showed that the controllability of a special type of networked systems, multi-agent systems (MASs), can be decoupled into two independent parts related to single nodes and network topology, respectively \cite{ji2014protocols,ni2013consensus,xiang2013controllability}.
Some research on the sampling controllability of MASs provided necessary and sufficient conditions under synchronous and asynchronous sampling protocols\cite{lou2012controllability,zhao2020data}.
In \cite{ji2014controllability}, it was found that MASs can always maintain controllability after sampling.
And \cite{lu2020sampled} discovered that topology switching could ensure the controllability of MASs with uncontrollable subsystems.
However, to date, there has been little work devoted to the controllability of more general networked sampled-data systems that cannot be decoupled.\par

This paper studies the state controllability of networked sampled-data systems. Based on the above analysis, we found that most existing research on the controllability of networked sampled-data systems did not consider multi-rate cases and general topologies, and only
control sampling was taken into account, without sampling on transmission channels. 
The contributions of this paper are fourfold.
\begin{itemize}
    \item[(1)] In the proposed model, sampling on both control and transmission channels is considered, including single- and multi-rate sampling patterns.
    More general networked systems with directed, weighted topology and multi-dimensional node dynamics are studied in this paper.
    \item[(2)] Necessary and/or sufficient controllability conditions are developed (mainly Theorems \ref{thm-main},\ref{thm2}, Corollaries \ref{coro1},\ref{coro_add3}) for networked sampled-data systems, which have much lower computational complexity than classic criteria.
    \item[(3)] The influence of sampling on the networked system's controllability is depicted, which is coupled with network topology, inner couplings, external inputs, and node dynamics.
    Especially, systems with one-dimensional or self-loop-state node dynamics can retain controllability during any periodic sampling (Corollaries \ref{thm:1state},\ref{coro:loop}).
    \item[(4)]The influence of network structure on sampling controllability is analyzed.
    The loss of controllability caused by the pathological sampling of single node systems can be eliminated by an appropriate design of network topology and inner couplings (Remark \ref{rem6}) unless the topology matrix is singular (e.g., undirected graph, chain, star) (Corollary \ref{coro2}).
\end{itemize}
\par
The rest of this paper is organized as follows.
The model formulation and preliminaries are introduced in Section \ref{sec:pre}.
In Section \ref{sec:Main}, some controllability conditions are developed for general networked sampled-data systems.
Networked systems with special topologies and dynamics are studied in Section \ref{sec:spe-topo} and Section \ref{sec:spe-dyna}, respectively.
Section \ref{sec:multi} preliminarily inspects the controllability of networked multi-rate sampled-data systems.
Finally, Section \ref{sec:con} summarizes this paper.

\section{Preliminaries and Model Formulation}
\label{sec:pre}
In this section, some useful preliminaries, and the model of networked systems with a general structure and periodic sampling pattern are introduced.

\subsection{Notations and Definitions}
Let $\mathbb{R}$, $\mathbb{C}$, and $\mathbb{N}$ denote the fields of real, complex, and natural numbers, respectively.
Denote by $I_n$ the identity matrix of size $n\times n$, and by $e_i$ the row vector with all zero entries except that the $i$th element is $1$. 
Denote by $diag\{a_1,a_2,...,a_n\}$ the $n\times n$ matrix with diagonal elements $a_1,a_2,...,a_n$, and by $diagblock\{A_1,A_2,...,A_n\}$ the matrix with diagonal block matrices $A_1,A_2,...,A_n$.\par
Moreover, denote the set of all eigenvalues of matrix $A\in\mathbb{R}^{n\times{n}}$ by $\sigma(A)=\{\lambda_1,...,\lambda_r\},1\leq{r}\leq{n}$, and by $M(\lambda_i|A)$ the eigenspace of $A$ with respect to $\lambda_i$.
$\mathcal{D}(A)$ denotes the dimension of the eigenspace of $A$. Let $\mathcal{R}(\cdot)$ and $\mathcal{N}^\top(\cdot)$ denote the column space and left null space, respectively.
Denote $dim(\cdot)$ the dimension of space.
The complex linear span of row vectors $v_1,v_2,...,v_n$ is denoted by $span\{v_i,v_2,...,v_n\}=\{\Sigma_{i=1}^n c_iv_i|c_i\in\mathbb{C}\}$, which is the set of their all complex linear combinations.
Let $A\otimes{B}$ denote the Kronecker product of matrices $A$ and $B$, and $V_1\odot{V_2}$ the direct sum of space $V_1$ and $V_2$.
Assumed that the dimensions of matrices are compatible with algebraic operations if they are not specified.\par

\begin{definition}\cite{roman2005advanced}\label{def1}
A row vector, $v^m$ is called an $m$th-order generalized left eigenvector of matrix $A$ corresponding to $\lambda\in\sigma(A)$ if:
$$v^m(A-\lambda{I})^m=0 \quad \text{and} \quad v^m(A-\lambda{I})^{m-1}\neq{0},$$
then $v^1,v^2,...,v^{\alpha}$ form a left Jordan chain of $A$ about $\lambda$ on top of $v^1$, where the maximum value of $\alpha$ is the length of this Jordan chain.
\end{definition}

\begin{definition}\cite{hao2019new}\label{def2}
Let $E\in\mathbb{C}^{n\times{n}},H\in\mathbb{C}^{n\times{n}},\theta\in\sigma(E)$. If row vectors $\xi^1,\xi^2,...,\xi^{\gamma}$ satisfy:
$$\xi^1(\theta{I}-E)=0,\ \text{and} \ 
\xi^j(\theta{I}-E)=\xi^{j-1}H,\ j=2,...,\gamma,$$
then $\xi^1, \xi^2,...,\xi^{\gamma}$ form a generalized left Jordan chain of $E$ about $H$ corresponding to $\theta$, where $\xi^1$ is the top vector and the maximum value of $\gamma$ is the length of this generalized Jordan chain.
\end{definition}

\subsection{Control Sampling of Single Systems}
Describe a CLTI system by $\dot{x}(t)=Ax(t)+Bu(t)$, where $x\in\mathbb{R}^n$ and $u\in\mathbb{R}^p$ represent the state and input, respectively, $A\in\mathbb{R}^{n\times{n}}$, $B\in\mathbb{R}^{n\times{p}}$.
By periodic sampling on control channels, a corresponding sampled-data system can be obtained and described by $x((k+1)h)=e^{Ah}x(kh)+\int_0^h e^{A\tau}d\tau Bu(kh)$,
where $h$ denotes the sampling period, $k\in\mathbb{N}$.\par
According to the Popov-Belevitch-Hautus (PBH) rank condition\cite{xiang2019advances}, a CLTI system $(A,B)$ is controllable if and only if $\forall{s}\in\mathbb{C}$, $rank[sI_n-A,B]=n$.
Due to the difference between the reachable subspace and controllable subspace of discrete-time LTI (DLTI) systems\cite{hespanha2018linear}, the sampled-data system is controllable if but not only if $\forall{s}\in\mathbb{C}$, $rank[sI_n-e^{Ah},\int_0^h e^{A\tau}d\tau B]=n$.\par

An uncontrollable CLTI system cannot gain controllability by sampling.
On the contrary, a controllable CLTI system may become uncontrollable after sampling due to the improper selection of $h$.
The sampling that destroys the controllability is called pathological sampling.
In \cite{chen2012optimal}, a definition of (non-)pathological sampling is provided:
\begin{definition}\cite{chen2012optimal}
The control sampling with period $h$ is non-pathological about $A$ if $\forall\lambda_1,\lambda_2\in\sigma(A)$ satisfy
$$\lambda_1-\lambda_2\neq\frac{2k\pi}{h}\rm{i}, \ k=\pm{1},\pm{2},... ,$$
otherwise $h$ is pathological about $A$.
Here, ``$\rm{i}$" refers to the imaginary unit.
\end{definition}

\subsection{General Model of Networked Sampled-data Systems}
Consider a general directed and weighted network consisting of identical node systems:
\begin{equation}
\left\{
        \begin{array}{l}
        \dot{x}_i(t)=Ax_i(t)+\sum_{j=1}^N{w_{ij}Hy_j(t)}+\delta_iBu_i(t)\\
        y_i(t)=Cx_i(t)\\
        \end{array}
\right.
\label{con:linear}
\end{equation}
$i=1,2,...,N$.
$x_i\in\mathbb{R}^n$, $y_i\in\mathbb{R}^m$ and $u_i\in\mathbb{R}^p$ denote the state, output, and external control input of node $i$, respectively.
$A\in\mathbb{R}^{n\times{n}}$ is the state matrix describing the node dynamics.
$B\in\mathbb{R}^{n\times{p}}$ is the input matrix, with $\delta_i=1$ if node $i$ is under control, otherwise $\delta_i=0$.
$C\in\mathbb{R}^{m\times{n}}$ is the output matrix, and $H\in\mathbb{R}^{n\times{m}}$ describes the inner couplings among the nodes.
$w_{ij}\neq{0}$ if an edge from node $j$ to node $i$ exists, otherwise $w_{ij}=0$.
Specially, $w_{ii}\neq{0}$ if there is a self-ring of node $i$ in the network, otherwise $w_{ii}=0$.\par

\begin{figure}[tb]
\centering
\begin{subfigure}[b]{.23\textwidth}
  \centering
  \includegraphics[width=.9\linewidth]{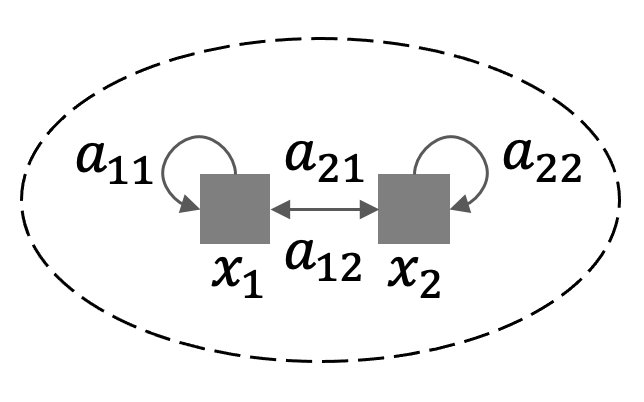}
  \caption{}
  \label{fig:node_system}
\end{subfigure}
\begin{subfigure}[b]{.23\textwidth}
  \centering
  \includegraphics[width=.83\linewidth]{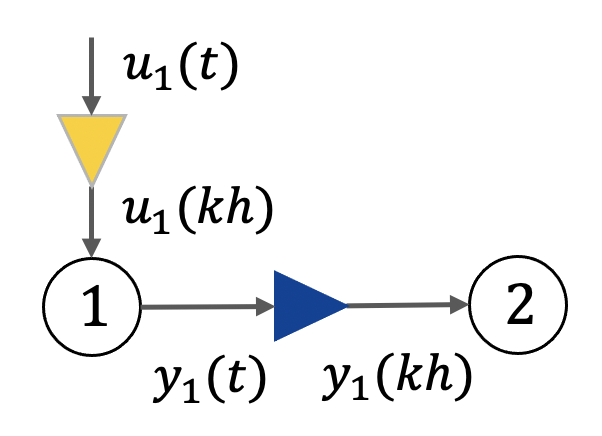} 
  \caption{}
  \label{fig:networked}
\end{subfigure}
\begin{subfigure}[b]{.47\textwidth}
   \centering
   \includegraphics[width=\linewidth]{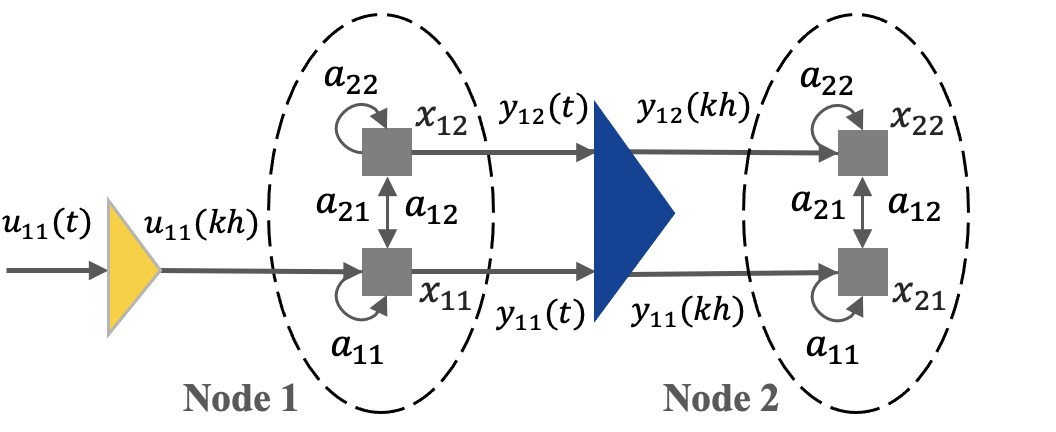}
   \caption{}
   \label{fig:all-networked}
\end{subfigure}
\caption{Networked sampled-data system consisting of two identical node systems. Compared to Fig.1 in \cite{hao2019new},
yellow and blue triangles are added to represent control and transmission sampling, respectively. (a) Node system. (b) Network structure. (c) The whole system.}
\label{fig:sampled}
\end{figure}

$W=[w_{ij}]\in\mathbb{R}^{N\times{N}}$ and $\Delta=$ diag$\{\delta_1,...,\delta_N\}\in\mathbb{R}^{N\times{N}}$ represent the transmission channels and the control channels of the networked system, respectively.
Let $X(t)=[x_1^\top(t),...,x_N^\top(t)]^\top$ be the total state of the networked system, and $U(t)=[u_1^\top(t),...,u_N^\top(t)]^\top$ be the total external control input. 
Then, the networked system can be rewritten in a compact form:
\begin{equation}
    \dot{X}(t)=\Phi{X}(t)+\Psi{U(t)}
\label{con:multi_linear}
\end{equation}
with
\begin{equation}
    \Phi=I_N\otimes{A}+W\otimes{HC}, \  \Psi=\Delta\otimes{B}.
\label{con:multi_linear_details}
\end{equation}\par
Then consider the associated sampled-data case.
Assume that the sampling is performed on all control channels and transmission channels simultaneously by samplers with zero-order hold.
Let $h$ be the sampling period, $t\in(kh,(k+1)h],k\in\mathbb{N}$.
As a result, a corresponding sampled-data system of (\ref{con:multi_linear},\ref{con:multi_linear_details}) can be represented as follows:
\begin{equation}
    X((k+1)h)=\Phi_s{X}(kh)+\Psi_s{U(kh)}
\label{MIMO_sampled_1}
\end{equation}
with
\begin{equation}
\label{MIMO_sampled_2}
    \Phi_s=I_N\otimes{e^{Ah}}+W\otimes\mathcal{H}(h), \ \Psi_s=\Delta\otimes\mathcal{B}(h),
\end{equation}
\begin{equation}
    \mathcal{B}(h)=\int_0^h e^{A\tau}d\tau{B}, \  \mathcal{H}(h)=\int_0^h e^{A\tau}d\tau{HC}.
\end{equation}

For example, a two-node networked sampled-data system is shown in Fig.\ref{fig:sampled}.
The identical two-state node system is depicted in Fig.\ref{fig:node_system}, the network structure is shown in Fig.\ref{fig:networked}, and the entire system is in Fig.\ref{fig:all-networked}. 
Let $a_{11}=a_{12}=a_{22}=1,a_{21}=-1$, and $h=\pi$, with
$$\begin{aligned}
&A=\left[\begin{array}{cc}
    1 & 1 \\
    -1 & 1
\end{array}\right],
B=\left[\begin{array}{cc}
    1 & 0 \\
    0 & 0
\end{array}\right],
C=\left[\begin{array}{cc}
    1 & 0 \\
    0 & 1
\end{array}\right],\\
&\Delta=\left[\begin{array}{cc}
    1 & 0 \\
    0 & 0
\end{array}\right],
W=\left[\begin{array}{cc}
    0 & 0 \\
    1 & 0
\end{array}\right],
H=\left[\begin{array}{cc}
    1 & 0 \\
    0 & 1
\end{array}\right].
\end{aligned}$$
By the PBH rank condition, one has $rank([sI_4-\Phi_s,\Psi_s])=2$ when $s=23.14$.
However, for the corresponding CLTI system $(\Phi,\Psi)$, $rank([sI_4-\Phi,\Psi])=4$ no matter what value $s$ takes.\par
It can be seen that the controllability of the networked system has been lost during the sampling.
However, the above verification may result in a curse of dimensionality when the network scale is large, so lower-dimensional criteria are needed.
Compared to the single LTI system, there are more coupled factors playing a role in the controllability of networked systems, such as the topologies and inner couplings, because the sampling performs on not only control channels but transmission channels.
Therefore, the controllability analysis for control sampling of single node systems cannot be directly utilized on networked sampled-data systems, either.

\section{Main Results}\label{sec:Main}
In this section, necessary and/or sufficient controllability conditions are derived for the networked sampled-data system (\ref{MIMO_sampled_1},\ref{MIMO_sampled_2}).
At first, a general condition is given in Theorem \ref{thm-main}, followed by other conditions for systems with special structures. Examples are provided to illustrate the verification.

First of all, Lemmas \ref{lem1}-\ref{chara} are given as preliminaries for the controllability rank conditions later on.

\begin{lemma}\label{lem1}
Suppose that $\sigma(W)=\{\lambda_1,\lambda_2,...,\lambda_r\},$ and $\sigma(E_i)=\{\theta_i^1,\theta_i^2,...,\theta_i^{p_i}\}$, where
\begin{equation}
    E_i=e^{Ah}+\lambda_i\mathcal{H}(h),  \nonumber
\end{equation}
$i=1,...,r.$ Then, $\sigma(\Phi_s)=\{\theta_1^1,\theta_1^2,...,\theta_1^{p_1},...,\theta_r^1,\theta_r^2,...,\theta_r^{p_r}\}$.
\end{lemma}

Lemma \ref{lem1} gives a direct representation of all the eigenvalues of $\Phi_s$, where the state matrix of the networked sampled-data system is essentially decomposed into a series of lower-dimensional matrices $E_1,...,E_r$.
Actually, $E_i$ can be regarded as the state matrix of a subsystem, which is related to an eigenvalue $\lambda_i$ of the network topology matrix $W$.\par
Moreover, let the left Jordan chain of $W$ corresponding to $\lambda_i$ be $v_i^1,v_i^2,...,v_i^{\alpha_i}$.
And let the generalized left Jordan chain of $E_i$ about $\mathcal{H}(h)$ related to $\theta_i^j$ be $\xi_{ij}^1,\xi_{ij}^2,...,\xi_{ij}^{\gamma_{ij}}$, where $\theta_i^j\in\sigma(E_i)$, $\xi_{ij}^1$ is the top vector, and $\gamma_{ij}$ is the length. $i\in\{1,2,...,r\}$, $j\in\{1,2,...,p_i\}$.
\begin{lemma}\label{chara}
The eigenspace of $\Phi_s$ about $\theta_i^j$ is $M(\theta_i^j|\Phi_s)=V(\theta_i^j)$, where $V(\theta_i^j)= \text{span}\{\eta_{ij}^1,\eta_{ij}^2,...,\eta_{ij}^{\beta_{ij}}\}$, $\eta_{ij}^1=v_i^1\otimes\xi_{ij}^1$, $\eta_{ij}^2=v_i^2\otimes\xi_{ij}^1+v_i^1\otimes\xi_{ij}^2,...,\eta_{ij}^{\beta_{ij}}=v_i^{\beta_{ij}}\otimes\xi_{ij}^1+v_i^{\beta_{ij}-1}\otimes\xi_{ij}^2+...+v_i^1\otimes\xi_{ij}^{\beta_{ij}}$, $\beta_{ij}=\text{min}\{\alpha_i,\gamma_{ij}\}$, $i=1,2,...,r$ and $j=1,2,...,p_i$.
Specially, if $\theta_{i_1}^{j_1}=\theta_{i_2}^{j_2}=...=\theta_{i_l}^{j_l}\triangleq\theta_i^j$, where $i_k\in\{1,...,r\},j_k\in\{1,...,p_i\}$ for $k=1,...,l$, $l>1$, $M(\theta_i^j|\Phi_s)=\odot_{k=1}^l{V(\theta_{i_k}^{j_k})}$.
\end{lemma}

\begin{remark}
Lemmas \ref{lem1}-\ref{chara} are the sampled-data version of the results of Theorem 1 in \cite{hao2019new}.
In order to verify the networked sampled-data system based on the PBH rank condition, it is necessary to calculate all the eigenvalues and the whole eigenspace of the state matrix $\Phi_s$.
Lemmas \ref{lem1}-\ref{chara} provide a decomposition method to obtain the eigenvalues by lower-dimensional subsystems (represented by $E_i$), and to calculate the eigenspace by direct representation of Kronecher products of vectors with lower dimensions (i.e., $v_i^k$ and $\xi_{ij}^k$).
In this way, the curse of dimensionality can be avoided because there is no need to perform matrix operations directly on high-dimensional $\Phi_s$.
\end{remark}

Based on the above analysis, a necessary and sufficient controllability criterion for the networked sampled-data system
(\ref{MIMO_sampled_1},\ref{MIMO_sampled_2}) is given as Theorem \ref{thm-main}.

\begin{lemma}\label{tar_ref}\cite{hao2022target} \
The uncontrollable subspace of linear system $(A,B)$ is $\mathcal{M}(A,B)=span\{v^j_i|v^j_iB=0,...,v^1_iB=0,1\leq{i}\leq{r},1\leq{j}\leq\alpha_i\}$, where the eigenvalues of $A$ are $\lambda_1,...,\lambda_r$, and the left Jordan chain of $\lambda_i$ is $v^1_i,...,v^{\alpha_i}_i$.
\end{lemma}

\begin{lemma}\label{dlti}
A discrete-time LTI system $(A,B)$ is controllable if and only if $\forall\lambda\in\sigma(A)$, $\lambda\neq{0}$, and the corresponding eigenvector $\xi$ satisfy $\xi{B}\neq{0}$, where $A\in\mathbb{R}^{n\times{n}}$.
\end{lemma}

\begin{proof}
By the definition of state controllability, $(A,B)$ is controllable if and only if: $\exists k,u(\cdot)$ and $\forall x(0)$ that
\begin{equation}\label{eq1}
    0=x(kh)=A^kx(0)+\sum_{\tau=0}^{k-1}A^{k-\tau-1}Bu(\tau h).
\end{equation}
We only need to consider $n$-step. Denote $[B,AB,...,$ $A^{n-1}B]=\mathcal{C}$. The necessary and sufficient condition for (\ref{eq1}) to have a solution is $rank(\mathcal{C})=rank([\mathcal{C},A^n])$, which means $\mathcal{N}^\top(\mathcal{C})=\mathcal{N}^\top([\mathcal{C},A^n])$.\par
If $\exists\lambda\in\sigma(A)$,
$\lambda\neq{0}$, and $\xi{B}=0$, it is obvious that $\xi\in\mathcal{N}^\top(\mathcal{C})$, but $\xi{A}^n\neq{0}$, and thus $(A,B)$ is uncontrollable.
In turn, assume that $(A,B)$ is uncontrollable.
Then $\exists\xi\in\mathcal{N}^\top(\mathcal{C})$, and $\xi{A}^n\neq{0}$.
According to Lemma \ref{tar_ref}, $\mathcal{N}^\top(\mathcal{C})=\mathcal{M}(A,B)$, which can be spanned by generalized eigenvectors of $A$.
Therefore $\exists\lambda_i\in\sigma(A)$, $\lambda_i\neq{0}$ and its eigenvector $v_i^1\in\mathcal{M}(A,B)$ such that $v_i^1B=0$.
\end{proof}

\begin{theorem}
\label{thm-main}
The networked sampled-data system (\ref{MIMO_sampled_1},\ref{MIMO_sampled_2}) is controllable if and only if $\forall\theta_i^j\in\sigma(\Phi_s)$, $\theta_i^j\neq{0}$, and $\forall\eta\in{M(\theta_i^j|\Phi_s)}$ and $\eta\neq{0}$, $\eta(\Delta\otimes\mathcal{B}(h))\neq{0}$, where $M(\theta_i^j|\Phi_s)$ can be calculated by Lemma \ref{chara}, $i=1,...,r,j=1,...,p_i$.\par
\end{theorem}

\begin{proof}
Based on Lemma \ref{dlti}, a DLTI system is controllable if and only if the input matrix left multiplied by any eigenvector related to any nonzero eigenvalue of the state matrix is nonzero.
By Lemmas \ref{lem1}-\ref{chara}, $M(\theta_i^j|\Phi_s)$ with $i=1,...,r$ and $j=1,...,p_i$ is the whole eigenspace of the state matrix $\Phi_s$, thus Theorem \ref{thm-main} holds.
\end{proof}

Theorem \ref{thm-main} reveals that the controllability of a networked sampled-data system can be inferred from lower-dimensional operations by matrices and vectors related to some subsystems.
The following example demonstrates the specific verification process by Theorem \ref{thm-main}.

\begin{example}
\label{exm_thm1}
Consider a networked sampled-data system of three identical node systems in a chain structure, where $w_{21}=w_{32}=1,\delta_1=1,h=0.1$, and $B=I_2$,
$$\begin{aligned}
&A=\begin{bmatrix}
1 & 0\\
1 & 1
\end{bmatrix},
HC=\begin{bmatrix}
1 & 0\\
0 & 0
\end{bmatrix},
e^{Ah}=\begin{bmatrix}
    1.1052 & 0 \\
    0.1105 & 1.1052
\end{bmatrix},\\
&\mathcal{H}(h)=\begin{bmatrix}
    0.1052 & 0 \\
    0.0053 & 0
\end{bmatrix},
\mathcal{B}(h)=\begin{bmatrix}
    0.1052 & 0 \\
    0.0053 & 0.1052
\end{bmatrix}.
\end{aligned}$$
It can be computed that the associated generalized left eigenvectors of $W$ with respect to $\lambda_1=0$ are $v_1^1=[1,0,0]$ and $v_1^2=[0,1,0]$. Then
$$E_1=e^{Ah}+\lambda_1\mathcal{H}(h)
=\left[\begin{array}{cc}
    1.1052 & 0 \\
    0.1105 & 1.1052
\end{array}\right].$$
The eigenvalue of $E_1$ is $\theta_1^1=1.1052$, with the generalized left Jordan chain of $E_1$ about $\mathcal{H}(h)$ corresponding to $\theta_1^1$ being $\xi_{11}^1=[1, 0]$, $\xi_{11}^2=[0, -0.9520]$.
Then $\eta_{11}^1=v_1^1\otimes\xi_{11}^1=[1, 0, 0, 0, 0, 0]$ and $\eta_{11}^2=v_1^2\otimes\xi_{11}^1+v_1^1\otimes\xi_{11}^2=[0,-0.9520,1,0,0,0]$.\par
One can verify that for any $a_1,a_2\in\mathbb{C}$ and $[a_1,a_2]\neq{0}$,
$$\begin{aligned}
&(a_1\eta_{11}^1+a_2\eta_{11}^2)(\Delta\otimes\mathcal{B}(h))\\
=&[a_1,-0.9520a_2,a_2,0,0,0]
\left[\begin{array}{cccc}
0.1052 & 0 & \cdots & 0\\
0.0053 & 0.1052 &  & \\
\vdots &  & \ddots & \vdots\\
0 & \cdots &  & 0
\end{array}\right]\\
\neq&{0}.
\end{aligned}$$
Therefore, this networked sampled-data system is controllable.
\end{example}

\begin{remark}\label{nonsin}
Note that Theorem \ref{thm-main} does not need to verify eigenvalue zero, which is different from Theorem 2 in \cite{hao2019new}.
The first \textbf{Note} on Page 102 in \cite{hespanha2018linear} also says that a DLTI system's reachable subspace and controllable subspace could not be equivalent when the state matrix is singular.
Since the PBH rank condition is derived based on the image space of the controllability matrix (which is equal to the reachable subspace and strictly included in the controllable subspace), it is sufficient but unnecessary.
In this case, the controllable subspace can still be $\mathbb{R}^{Nn}$ even if the PBH rank condition is not satisfied with eigenvalue zero.
On the contrary, by the last \textbf{Note} on Page 101 in \cite{hespanha2018linear}, the PBH rank condition is still necessary and sufficient if $\Phi_s$ is nonsingular, i.e.,
$E_i$ has no zero eigenvalues for every $i=1,2,...,r$.
\end{remark}

The following is a counterexample showing that the PBH rank condition is sufficient but unnecessary.

\begin{example}
Consider a networked sampled-data system consisting of three identical nodes in a chain structure with self-rings, where $w_{11}=w_{22}=w_{33}=-1,w_{12}=w_{23}=1,\delta_1=1,h=0.1$, and
$$A=\left[\begin{array}{cc}
    0 & 1 \\
    0 & 0\end{array}\right],
    B=\left[\begin{array}{c}
    1 \\
    0\end{array}\right],
    HC=\left[\begin{array}{cc}
    10 & 0 \\
    0 & 10\end{array}\right].$$
According to the definition of controllability of DLTI systems, $(\Phi_s,\Psi_s)$ is controllable if and only if $\forall{X(0)}\in\mathbb{R}^{6}$, there exist $U(0),...,U(kh)\in\mathbb{R}^{3}$, and finite $k$ such that
    \begin{equation}\label{c_exm}
    0=\Phi_s^kX(0)+\sum_{\tau=0}^{k-1}\Phi_s^{k-\tau-1}\Psi_sU(\tau{h}).
    \end{equation}
    By simple calculation, it is easy to find that $\Phi_s$ is a nilpotent matrix.
    Therefore, if one takes $U(\cdot)=0$ and $k\geq6$, equation (\ref{c_exm}) holds, which indicates that the system is controllable.
    However, $\sigma(W)=\{-1\}$, and $E_1=e^{Ah}-\mathcal{H}(h)$.
    It can be calculated that $\sigma(E_1)=\{0\}$.
    By Definitions \ref{def1}-\ref{def2}, one has $v_1^1=[0,0,1],v_1^2=[0,1,0],\xi_{11}^1=[0,1],\xi_{11}^2=[-20,*]$.
    Therefore, the eigenvectors of $\Phi_s$ corresponding to the eigenvalue $0$ are $\eta_{11}^1=v_1^1\otimes\xi_{11}^1=[0,0,0,0,0,1]$, and $\eta_{11}^2=v_1^1\otimes\xi_{11}^2+v_1^2\otimes\xi_{11}^1=[0,0,0,1,-20,*]$.
    It is obvious that $\eta_{11}^1(\Delta\otimes\mathcal{B}(h))=0$ and $\eta_{11}^2(\Delta\otimes\mathcal{B}(h))=0$.
    If the PBH rank condition is necessary, it means that the system is uncontrollable, which leads to a contradiction.
\end{example}

\begin{remark}
In \cite{hao2019new}, it was revealed that the controllability of an LTI system is affected by network topology, node dynamics, external control inputs, and inner couplings.
Theorem \ref{thm-main} further develops the condition for the corresponding sampled-data system, where the subsystem $(A+\lambda_iH,B)$ in \cite{hao2019new} is replaced by $(e^{Ah}+\lambda_i\mathcal{H}(h),\mathcal{B}(h))$.
Therefore, it can be explicitly found that the sampling period $h$ also has effects on the controllability of the networked sampled-data system and is coupled with other network-related factors.
\end{remark}

When the network topology matrix is diagonalizable, an easy-verified condition is given below.

\begin{theorem}\label{thm2}
Assume that $W$ is diagonalizable, the networked sampled-data system (\ref{MIMO_sampled_1},\ref{MIMO_sampled_2}) is controllable if the following conditions hold simultaneously:\par
(1) $(W,\Delta)$ is controllable;\par
(2) $(E_i,\mathcal{B}(h))$ is controllable for every $i=1,2,...,N$;\par
(3) If $\theta\in\mathbb{C}$ is a common eigenvalue of $E_{k_1},E_{k_2},...,E_{k_q}$, $1<q\leq{N}$, then $(v_{k_1}\otimes\xi_{k_1}+v_{k_2}\otimes\xi_{k_2}+...+v_{k_q}\otimes\xi_{k_q})(\Delta\otimes{\mathcal{B}(h)})\neq{0}$ for $\forall\xi_j\in{M(\theta|E_j)}$, with $j=k_1,...,k_q$, and $[\xi_{k_1},...,\xi_{k_q}]\neq{0}$.
\end{theorem}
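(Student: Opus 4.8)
The plan is to reduce the statement to the sufficient PBH-type test already recorded in Theorem \ref{thm_Add2}: the system is controllable as soon as, for every eigenvalue $\theta$ of $\Phi_s$ and every nonzero $\eta\in M(\theta|\Phi_s)$, one has $\eta\Psi_s=\eta(\Delta\otimes\mathcal{B}(h))\neq 0$. So the entire argument amounts to verifying this for the eigenspaces described by Lemma \ref{chara}, case by case, drawing on the three hypotheses.

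First I would cash in the diagonalizability of $W$. Because $W$ is diagonalizable, every left Jordan chain of $W$ has length one, so each top vector $v_i^1$ is an ordinary left eigenvector $v_i$ of $W$ and $\alpha_i=1$; hence $\beta_{ij}=\min\{\alpha_i,\gamma_{ij}\}=1$ for all $i,j$. By Lemma \ref{chara}, when $\theta$ is an eigenvalue of a single $E_i$ the eigenspace $M(\theta|\Phi_s)$ is spanned by vectors of the form $v_i\otimes\xi$ with $\xi$ a left eigenvector of $E_i$ for $\theta$; since $v_i\otimes(\cdot)$ is linear in its second slot, every nonzero $\eta$ in this space is again of the form $v_i\otimes\xi$ with $\xi\in M(\theta|E_i)\setminus\{0\}$. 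When $\theta$ is a common eigenvalue of $E_{k_1},\dots,E_{k_q}$, the eigenspace is the direct sum $\bigoplus_{t=1}^{q}\{v_{k_t}\otimes\xi_{k_t}:\xi_{k_t}\in M(\theta|E_{k_t})\}$; here I would also note the elementary fact that, the $v_{k_t}$ being eigenvectors of $W$ for distinct eigenvalues and hence linearly independent, a sum $\sum_t v_{k_t}\otimes\xi_{k_t}$ vanishes iff every $\xi_{k_t}=0$.

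Then I would split into the two cases and apply the mixed-product rule $(v\otimes\xi)(\Delta\otimes\mathcal{B}(h))=(v\Delta)\otimes(\xi\mathcal{B}(h))$. \textbf{Case 1:} $\theta$ is an eigenvalue of exactly one $E_i$. Then $\eta\Psi_s=(v_i\Delta)\otimes(\xi\mathcal{B}(h))$ with $v_i$ a left eigenvector of $W$ and $\xi\neq 0$ a left eigenvector of $E_i$; this is nonzero iff $v_i\Delta\neq 0$ and $\xi\mathcal{B}(h)\neq 0$. The first follows from hypothesis (1) through the PBH criterion for $(W,\Delta)$ (no left eigenvector of $W$ is annihilated by $\Delta$), and the second from hypothesis (2) through the PBH criterion for $(E_i,\mathcal{B}(h))$. \textbf{Case 2:} $\theta$ is common to $E_{k_1},\dots,E_{k_q}$ with $q>1$. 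A general nonzero $\eta\in M(\theta|\Phi_s)$ is then precisely $\sum_{t=1}^{q}v_{k_t}\otimes\xi_{k_t}$ with $\xi_{k_t}\in M(\theta|E_{k_t})$ and $(\xi_{k_1},\dots,\xi_{k_q})\neq 0$, and hypothesis (3) states exactly that $\eta\Psi_s\neq 0$. In both cases the test of Theorem \ref{thm_Add2} holds, so the networked sampled-data system is controllable.

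The one place that needs genuine care — the main obstacle — is the bookkeeping that shows hypotheses (1)--(3) jointly exhaust the PBH test on $\Phi_s$: one must confirm that, by Lemma \ref{thm1}, every eigenvalue of $\Phi_s$ arises either as an eigenvalue of a single $E_i$ (covered by (1)+(2)) or as a shared eigenvalue (covered by (3)), and that ``shared'' is the correct notion here. In particular, since $E_i=e^{Ah}+\lambda_i\mathcal{H}(h)$ depends on $\lambda_i$ only, a repeated eigenvalue of $W$ produces coinciding $E_i$'s whose \emph{entire} spectra are common, which is why condition (3) is phrased with an index set $\{k_1,\dots,k_q\}\subseteq\{1,\dots,N\}$ rather than over distinct spectra; getting this alignment right is what makes the three conditions sufficient. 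The remaining ingredients (the Kronecker mixed-product identity and the eigenvector form of the PBH criterion) are routine.
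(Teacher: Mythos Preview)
Your proposal is correct and follows essentially the same route as the paper: both reduce the PBH test on $(\Phi_s,\Psi_s)$ to the block-diagonal picture indexed by the left eigenvectors of $W$ and then check that a nonzero left eigenvector of $\Phi_s$ cannot annihilate $\Delta\otimes\mathcal{B}(h)$ without violating one of (1)--(3); the paper carries out the similarity transformation explicitly and argues by contradiction, whereas you invoke Theorem~\ref{thm_Add2} and Lemma~\ref{chara} and argue directly, which is a cosmetic rather than substantive difference. One small correction: in Case~2 you justify the linear independence of the $v_{k_t}$ by asserting they correspond to \emph{distinct} eigenvalues of $W$, but (as you yourself note in the final paragraph) repeated eigenvalues of $W$ are allowed; the correct justification is simply that $v_1,\dots,v_N$ are the rows of the invertible diagonalizing matrix and hence linearly independent regardless.
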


\begin{proof}
According to the PBH rank condition for DLTI systems, system (\ref{MIMO_sampled_1},\ref{MIMO_sampled_2}) is controllable if $[sI_{Nn}-\Phi_s,\Psi_s]$ is of full row rank for $\forall{s}\in\mathbb{C}$.
Since all of the Jordan blocks of $W$ are one-dimensional, denote the $N$ linearly independent eigenvectors of $W$ by $v_1,v_2,...,v_N$, and
\begin{equation}
    T=[v_1^\top,...,v_N^\top]^\top, \  T\Delta=[d_1^\top,...,d_N^\top]^\top,\nonumber
\end{equation}
where $v_i\in\mathbb{C}^{1\times{N}}$, $d_i\in\mathbb{C}^{1\times{N}}$, $i=1,...,N$.
Then make a similarity transformation:
$$\begin{aligned}
&[sI_{Nn}-\Phi_s,\Psi_s]\\
=&[sI_{Nn}-I_N\otimes{e^{Ah}}-W\otimes{\mathcal{H}(h)},\Delta\otimes{\mathcal{B}(h)}]\\
=&(T^{-1}\otimes{I_n})[sI_{Nn}-\mathcal{F},(T\Delta)\otimes\mathcal{B}(h)]\left[ \begin{array}{cc}
T\otimes{I_n} & 0\\
0 & I_{Np}
\end{array}
\right ]\\
=&(T^{-1}\otimes{I_n})[sI_{Nn}-\mathcal{F},\mathcal{G}]\left[ \begin{array}{cc}
T\otimes{I_n} & 0\\
0 & I_{Np}
\end{array}
\right ],
\end{aligned}$$
where 
\begin{equation}
[sI_{Nn}-\mathcal{F},\mathcal{G}]=
\left[ {\begin{array}{c:c}
\begin{matrix}
sI_n-E_1 &  &  \\
 & \ddots &  \\
 &  & sI_n-E_N \\
\end{matrix}&
\begin{matrix}
G_1 \\
\vdots \\
G_N \\
\end{matrix}
\end{array}} \right]\nonumber
\end{equation}
\begin{equation}
G_i=d_i\otimes\mathcal{B}(h), \ i=1,2,...,r.\nonumber
\end{equation}
Since $T$ is invertible, the rank of $[sI_{Nn}-\Phi_s,\Psi_s]$ equals to the rank of
$[sI_{Nn}-\mathcal{F},\mathcal{G}]$.
Then the proof can be completed by showing: If $(\mathcal{F},\mathcal{G})$ is not controllable, it will lead to the contradiction of one of the conditions in Theorem \ref{thm2}.\par

Firstly, consider the case that $(E_k,G_k)$ is not controllable for some $k\in\{1,2,...,N\}$.
If $d_k={0}$, $v_k[\lambda_k{I_N}-W,\Delta]=[0,d_k]=0$, which implies that $(W,\Delta)$ is uncontrollable and condition (1) does not hold.
Then assume that $d_k\neq{0}$. 
If $rank([sI_n-E_k,d_k\otimes\mathcal{B}(h)])=rank([sI_n-E_k,\mathcal{B}(h)])<n$, condition (2) does not hold.\par

Finally, discuss the last possibility that $(E_k,G_k)$ is controllable for every $k=1,2,...,N$, but there exist some $\theta$ being a common eigenvalue of $E_{k_1},...,E_{k_q}$, $1<q\leq{N}$, making $[\theta{I_{Nn}}-\mathcal{F},\mathcal{G}]$ not of full row rank.
Then there exists some $\xi=[\xi_{k_1},...,\xi_{k_q}]$, $\xi_j\in{M}(\theta|E_j)$, $j=k_1,...,k_q$, satisfying
$$\begin{aligned}
&\xi\left[ {\begin{array}{c:c}
\begin{matrix}
\theta{I_n}-E_{k_1} &       & \\
         &\ddots & \\
         &       & \theta{I_n}-E_{k_q} \end{matrix}&
\begin{matrix}
d_{k_1}\otimes\mathcal{B}(h)\\
\vdots\\
d_{k_q}\otimes\mathcal{B}(h)
\end{matrix}\end{array}} \right]\\
=&\left[0,(\xi_{k_1}(d_{k_1}\otimes\mathcal{B}(h))+...+\xi_{k_q}(d_{k_q}\otimes\mathcal{B}(h))\right]\\
=&\left[0,(v_{k_1}\otimes\xi_{k_1}+...+v_{k_q}\otimes\xi_{k_q})(\Delta\otimes\mathcal{B}(h))\right]=0,
\end{aligned}$$
condition (3) is contradicted.
The proof is complete.
\end{proof}

\begin{remark}
To verify the state controllability of the networked sampled-data system, the computational complexity of the Kalman criterion or the PBH rank condition is $\mathcal{O}(N^4n^4)$.
However, if it is checked by Theorem \ref{thm-main}, the computational complexity is no more than $\mathcal{O}(N^4+n^4N+N^3n^3)$ (consistent with the analysis in Remark 1 of \cite{hao2019new}).
As for Theorem \ref{thm2}, it is more intuitively that the computational complexity of conditions (1)-(3) is no more than $\mathcal{O}(N^4),\mathcal{O}(n^4N),\mathcal{O}(n^3N^3)$, respectively.
Since each eigenvalue is either common or non-common, the actual computational complexity can be even lower.
In addition, the steps such as the eigenspace decomposition of state matrices and the verification of subsystems can be performed in parallel in practical applications.
Therefore, the proposed criteria can reduce the computational burden.
\end{remark}

Note that Theorem \ref{thm-main} can still verify the networked sampled-data system with a diagonalizable topology matrix.
However, in any case, it requires calculating the whole eigenspace of $Nn$-dimension, and multiplying input matrix $\Delta\otimes\mathcal{B}(h)\in\mathbb{R}^{Nn\times{Np}}$.
Therefore, there are still high-dimensional operations in the verification process.
By Theorem \ref{thm2}, $Nn$-dimensional vector operations are involved only when there are common eigenvalues (corresponding to condition (3)).
For non-common eigenvalues, the controllability can be checked just by conditions (1) and (2), which only involves vector operations of $N$-dimension and $n$-dimension.
The following example illustrates the utilization of Theorem \ref{thm2}.

\begin{example}
Consider a networked sampled-data system consisting of three identical nodes, where $A,B,\Delta,h$ are the same as that in Example \ref{exm_thm1}, so are $e^{Ah}$ and $\mathcal{B}(h)$, and $w_{12}=w_{21}=w_{23}=w_{32}=1,HC=I_2$.
It is easy to compute that $\sigma(W)=\{0,1.4142,-1.4142\}$, then
$$\begin{aligned}
&E_1=e^{Ah}=\left[\begin{array}{cc}
    1.1052 & 0 \\
    0.1105 & 1.1052
\end{array}\right], \\
&E_2=e^{Ah}+1.4142\mathcal{H}(h)=\left[\begin{array}{cc}
    1.2539 & 0 \\
    0.1181 & 1.2539
\end{array}\right], \\
&E_3=e^{Ah}-1.4142\mathcal{H}(h)=\left[\begin{array}{cc}
    0.9564 & 0 \\
    0.1030 & 0.9564
\end{array}\right].
\end{aligned}$$
Since $W$ is diagonalizable, the controllability of the networked sampled-data system can be tested by Theorem \ref{thm2}.\par
Obviously $(W,\Delta)$ is controllable.
Next inspect the controllability of $(E_1,\mathcal{B}(h))$, $(E_2,\mathcal{B}(h))$ and $(E_3,\mathcal{B}(h))$.
By simple calculation it shows that $\forall{s}\in\mathbb{C}$,
$$\begin{aligned}
 &rank([sI_2-E_1,\mathcal{B}(h)]) \\
= \ &rank(\left[\begin{array}{cccc}
    s-1.1052 & 0 & 0.1052 & 0 \\
    -0.1105 & s-1.1052 & 0.0053 & 0.1052
\end{array}\right])=2,\\
\end{aligned}
$$
$$\begin{aligned}
&rank([sI_2-E_2,\mathcal{B}(h)])\\
= \ &rank(\left[\begin{array}{cccc}
    s-1.2539 & 0 & 0.1052 & 0 \\
    -0.1181 & s-1.2539 & 0.0053 & 0.1052
\end{array}\right])=2,\\ 
&rank([sI_2-E_3,\mathcal{B}(h)])\\
= \ &rank(\left[\begin{array}{cccc}
    s-0.9564 & 0 & 0.1052 & 0 \\
    -0.1030 & s-0.9564 & 0.0053 & 0.1052
\end{array}\right])=2.
\end{aligned}$$\par
Since $E_1,E_2$ and $E_3$ share no common eigenvalues, condition (3) of Theorem \ref{thm2} is no need to check.
Therefore, the system is controllable according to Theorem \ref{thm2}.
\label{exm_ex_2}
\end{example}

\begin{remark}
For the networked system with an undirected (or bidirectional) topological structure, its $W$ is a real symmetric matrix and thus can be diagonalized.
Therefore, the conditions in Theorem \ref{thm2} can be applied to the undirected (or bidirectional) networked sampled-data systems.
In addition, many other topological structures can also be represented by diagonalizable $W$, such as cycles discussed in the next section.
\end{remark}

According to the analysis in Remark \ref{nonsin}, a necessary and sufficient controllability criterion can be derived as follows.

\begin{corollary}
\label{coro1}
Assume that $W$ is diagonalizable and $E_i$ is nonsingular for every $i=1,...,N$. 
The networked sampled-data system (\ref{MIMO_sampled_1},\ref{MIMO_sampled_2}) is controllable if and only if conditions (1), (2) and (3) in Theorem \ref{thm2} hold simultaneously.
\end{corollary}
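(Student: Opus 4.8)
The plan is to split the equivalence into its two implications and to spend essentially all the work on one of them. The \emph{if} direction requires nothing new: conditions (1)--(3) are exactly the hypotheses of Theorem~\ref{thm2}, which already yields controllability of $(\Phi_s,\Psi_s)$ without using the nonsingularity assumption. So the whole content of the corollary is the \emph{only if} direction, and this is where the extra hypothesis ``$E_i$ nonsingular for all $i$'' enters, via Remark~\ref{nonsin}: when every $E_i$ is nonsingular, $\Phi_s$ is nonsingular, and then for the DLTI pair $(\Phi_s,\Psi_s)$ the PBH rank test $\mathrm{rank}(sI_{Nn}-\Phi_s,\Psi_s)=Nn$ for all $s\in\mathbb{C}$ is necessary as well as sufficient for controllability (Remark~\ref{nonsin}, \cite{hespanha2018linear}). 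Hence, assuming the system controllable, I may start from this rank identity.

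The next step is to transport that rank identity through the same similarity transformation used in the proof of Theorem~\ref{thm2}: since $W$ is diagonalizable, $(sI_{Nn}-\Phi_s,\Psi_s)$ has, for every $s$, the same rank as $(sI_{Nn}-\mathcal{F},\mathcal{G})$, where $\mathcal{F}$ is the block-diagonal matrix with blocks $E_1,\dots,E_N$, and $\mathcal{G}=(G_1^\top,\dots,G_N^\top)^\top$ with $G_i=d_i\otimes\mathcal{B}(h)$ and $(d_1^\top,\dots,d_N^\top)^\top=T\Delta$. So controllability is equivalent to $(sI_{Nn}-\mathcal{F},\mathcal{G})$ having full row rank for all $s$. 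I would then argue by contraposition on each of (1), (2), (3), which amounts to reading the three-case analysis in the proof of Theorem~\ref{thm2} backwards. (i) If (1) fails, there is a left eigenvector $v_k$ of $W$ with $d_k=v_k\Delta=0$, hence $G_k=0$; choosing any $\theta\in\sigma(E_k)$ and a left eigenvector $\xi$ of $E_k$ for $\theta$, the row vector equal to $\xi$ in the $k$-th block and zero elsewhere annihilates $(\theta I_{Nn}-\mathcal{F},\mathcal{G})$, so the rank drops. (ii) If (2) fails for some $k$, there is $\theta\in\sigma(E_k)$ and a nonzero left eigenvector $\xi$ of $E_k$ with $\xi\mathcal{B}(h)=0$; since $\xi G_k=\xi(d_k\otimes\mathcal{B}(h))=d_k\otimes(\xi\mathcal{B}(h))=0$, the same block vector again kills $(\theta I_{Nn}-\mathcal{F},\mathcal{G})$. (iii) If (3) fails, there is a common eigenvalue $\theta$ of $E_{k_1},\dots,E_{k_q}$ and vectors $\xi_{k_s}\in M(\theta|E_{k_s})$, not all zero, with $\sum_s(v_{k_s}\otimes\xi_{k_s})(\Delta\otimes\mathcal{B}(h))=\sum_s\xi_{k_s}G_{k_s}=0$; placing $\xi_{k_s}$ in block $k_s$ and zero elsewhere produces a nonzero left null vector of $(\theta I_{Nn}-\mathcal{F},\mathcal{G})$. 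Each case contradicts full row rank, so (1), (2) and (3) all hold; together with the \emph{if} direction this is the equivalence.

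The computations are routine once the translation dictionary is fixed, so the step I would treat as the crux is the single passage ``controllable $\Rightarrow$ PBH rank condition'': it rests entirely on $\Phi_s$ being nonsingular, which is why the hypothesis that every $E_i$ is nonsingular (equivalently $\theta_i^j\neq 0$ for all $i,j$) cannot be dropped, and it is exactly the point recorded in Remark~\ref{nonsin}. The only other thing to keep straight is the Kronecker bookkeeping $(v_k\otimes\xi_k)(\Delta\otimes\mathcal{B}(h))=(v_k\Delta)\otimes(\xi_k\mathcal{B}(h))=d_k\otimes(\xi_k\mathcal{B}(h))=\xi_k G_k$, which is what aligns the eigenvector description of $M(\theta|\Phi_s)$ coming from Lemma~\ref{chara} with the block structure of $(\mathcal{F},\mathcal{G})$; once that identity is in hand, cases (i)--(iii) are immediate and no further estimates are needed.
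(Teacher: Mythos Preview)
Your proposal is correct and follows essentially the same route as the paper: sufficiency by Theorem~\ref{thm2}, necessity via Remark~\ref{nonsin} and the block form $(sI-\mathcal{F},\mathcal{G})$, then contraposition on each of (1)--(3). One small point to tighten in case~(i): the left eigenvector $v$ of $W$ with $v\Delta=0$ need not coincide with one of the fixed basis rows $v_1,\dots,v_N$ of $T$ when the relevant eigenvalue has geometric multiplicity $q>1$, so ``hence $G_k=0$'' is not immediate; the paper treats this explicitly by writing $v=\sum_s a_s v_{k_s}$ and placing $a_s\xi$ in block $k_s$ (using $E_{k_1}=\cdots=E_{k_q}$), or equivalently you may simply choose the diagonalizing $T$ so that $v$ is one of its rows.
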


\begin{proof}
The sufficiency part of the proof has been shown as the proof of Theorem \ref{thm2}.\par

Necessity: If $\Phi_s$ is nonsingular, system (\ref{MIMO_sampled_1},\ref{MIMO_sampled_2}) is controllable only if $[sI_{Nn}-\mathcal{F},\mathcal{G}]$ is of full row rank.\par

If $(W,\Delta)$ is uncontrollable, there exists some $\lambda_k\in\sigma(W)$, $k\in\{1,...,N\}$ and its corresponding eigenvector $v_k$, satisfying $v_k\Delta=d_k=0$.
When the geometric multiplicity of $\lambda_k$ is $1$, let $\theta_k\in\sigma(E_k)$, $E_k=e^{Ah}+\lambda_k\mathcal{H}(h)$, then $rank([\theta_k{I}_n-E_k,d_k\otimes\mathcal{B}(h)])= rank(\theta_k{I}_n-E_k)<n$, making $[sI_{Nn}-\mathcal{F},\mathcal{G}]$ not of full row rank when $s=\theta_k$.
When the geometric multiplicity of $\lambda_k$ is $q>1$,
it has $q$ linearly independent eigenvectors $v_{k_1},v_{k_2},...,v_{k_q}$, where $k_i\in\{1,2,...,N\}$, $i\in\{1,...,q\}$. 
Assume that there exists some $v_k=aT_k$, where $a=[a_1,a_2,...,a_q]\in\mathbb{C}^{1\times{q}},$ with $a\neq{0}$, $T_k=[v_{k_1}^\top,...,v_{k_q}^\top]^\top$, $D_k=[d_{k_1}^\top,...,d_{k_q}^\top]^\top$, such that $v_k\Delta=aT_k\Delta=aD_k=0$.
It is obvious that $E_{k_1}=...=E_{k_q}=e^{Ah}+\lambda_k\mathcal{H}(h)$.
Let $\theta_k\in\sigma(E_{k_1})$, $\xi_{k_1}\in{M}(\theta_k|E_{k_1})$, denote $\xi=a\otimes\xi_{k_1}$, one has
$$\begin{aligned}
&\xi\left[ {\begin{array}{c:c}
\begin{matrix}
\theta_k{I_n}-E_{k_1} &       & \\
         &\ddots & \\
         &       & \theta_k{I_n}-E_{k_q} \end{matrix}&
\begin{matrix}
d_{k_1}\otimes\mathcal{B}(h)\\
\vdots\\
d_{k_q}\otimes\mathcal{B}(h)
\end{matrix}
\end{array}} \right]\\
=&\left[0,(aD_k)\otimes(\xi_{k_1}\mathcal{B}(h))\right]\\
=&\left[0,0\otimes(\xi_{k_1}\mathcal{B}(h))\right]=0,
\end{aligned}$$
which also makes $[s{I_{Nn}}-\mathcal{F},\mathcal{G}]$ not of full row rank when $s=\theta_k$.\par

If $(E_k,\mathcal{B}(h))$ is uncontrollable for some $k\in\{1,2,...,r\}$, then $[\theta_kI_n-E_k,d_k\otimes\mathcal{B}(h)]$ and furthermore $[\theta_kI_{Nn}-\mathcal{F},\mathcal{G}]$ are not of full row rank.\par

Lastly, consider the case that $\theta$ is a common eigenvalue of $E_{k_1},...,E_{k_q}$, $1<q\leq{N}$ and $k_1,...,k_q\in\{1,2,...,r\}$. Assume that $\exists{\xi_j}\in{M(\theta|E_j)},j=k_1,...,k_q$, such that $(v_{k_1}\otimes\xi_{k_1}+v_{k_2}\otimes\xi_{k_2}+...+v_{k_q}\otimes\xi_{k_q})(\Delta\otimes{\mathcal{B}(h)})=0$.
Let $\eta=[\eta_1,...,\eta_N]$, where $\eta_j\in\mathbb{C}^{1\times{n}}$, $\eta_j=\xi_j$ if $j\in\{k_1,...,k_q\}$, otherwise $\eta_j=0$.
It follows that
$$\begin{aligned}
&\eta[\theta{I_{Nn}}-\mathcal{F},\mathcal{G}]\\
=&\left[0,(\xi_{k_1}(d_{k_1}\otimes\mathcal{B}(h))+...+\xi_{k_q}(d_{k_q}\otimes\mathcal{B}(h))\right]\\
=&\left[0,(v_{k_1}\otimes\xi_{k_1}+...+v_{k_q}\otimes\xi_{k_q})(\Delta\otimes\mathcal{B}(h))\right]=0.
\end{aligned}$$
Therefore, $[\theta{I_{Nn}}-\mathcal{F},\mathcal{G}]$ is not of full row rank, which completes
the proof.
\end{proof}

The following example intuitively shows the efficiency of Corollary \ref{coro1} to identify the uncontrollability of the networked sampled-data system.

\begin{example}
Reonsider the system in Example \ref{exm_ex_2}, but let
$$
H=\left[\begin{array}{cc}
    1 & 0 \\
    0 & 0
\end{array}\right].$$
Then one has
$$\begin{aligned}
&E_1=e^{Ah}=\left[\begin{array}{cc}
    1.1052 & 0 \\
    0.1105 & 1.1052
\end{array}\right], \\
&E_2=e^{Ah}+1.4142\mathcal{H}(h)=\left[\begin{array}{cc}
    1.2539 & 0 \\
    0.1181 & 1.1052
\end{array}\right], \\
&E_3=e^{Ah}-1.4142\mathcal{H}(h)=\left[\begin{array}{cc}
    0.9564 & 0 \\
    0.1030 & 1.1052
\end{array}\right].
\end{aligned}$$
Since $W$ is diagonalizable and $0$ is not an eigenvalue of $E_1$, $E_2$ or $E_3$, the controllability of the networked sampled-data system can be tested by Corollary \ref{coro1}.\par

Note that $1.1052$ is a common eigenvalue of $E_1,E_2$ and $E_3$, and $\xi_1=[1,0],\xi_2=[0.7939,0],\xi_3=[0.6922,1],v_1=[-1,0,1],v_2=[1,1.4142,1],v_3=[1,-1.4142,1]$.
Then take $\eta=v_1\otimes{a}_1\xi_1+v_2\otimes{a}_2\xi_2+v_3\otimes{a}_3\xi_3$.
When $a_1=0.7939,a_2=1,a_3=0,\eta(\Delta\otimes\mathcal{B}(h))=0$.
According to condition (3) of Corollary \ref{coro1}, the system is uncontrollable.
\end{example}

Some necessary conditions are listed as follows:

\begin{corollary}
\label{coro_add3}
Assume that $0\notin\sigma(E_i)$ for every $i=1,...,r$.
The networked system (\ref{MIMO_sampled_1},\ref{MIMO_sampled_2}) is controllable only if $(W,\Delta)$ and $(E_i,\mathcal{B}(h))$, $i=1,2,...,r$ are all controllable.
\end{corollary}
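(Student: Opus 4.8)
The plan is to argue by contraposition, relying on the fact recorded in Remark \ref{nonsin} (and \cite{hespanha2018linear}) that when the state matrix is nonsingular the PBH rank test is \emph{necessary} as well as sufficient for controllability of a DLTI system. First I would observe that the standing hypothesis $0\notin\sigma(E_i)$ for every $i=1,\dots,r$, combined with Lemma \ref{thm1}, yields $0\notin\sigma(\Phi_s)$, so $\Phi_s$ is nonsingular. Hence system (\ref{MIMO_sampled_1},\ref{MIMO_sampled_2}) is controllable if and only if $\mathrm{rank}(sI_{Nn}-\Phi_s,\Psi_s)=Nn$ for all $s\in\mathbb{C}$, and to prove the corollary it suffices to show that failure of controllability of $(W,\Delta)$ or of some $(E_i,\mathcal{B}(h))$ produces a value of $s$ at which this rank drops.

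For the first case, suppose $(W,\Delta)$ is uncontrollable. By the PBH test for $W$ there exist $\lambda_i\in\sigma(W)$ and a (first-order) left eigenvector $v$ with $vW=\lambda_i v$, $v\neq 0$, and $v\Delta=0$. Pick any $\theta\in\sigma(E_i)$ (a complex square matrix always has one) and any left eigenvector $\xi$ of $E_i$ about $\theta$. By the top-vector case of Lemma \ref{chara}, $\eta=v\otimes\xi$ is a nonzero left eigenvector of $\Phi_s$ about $\theta$; and by the mixed-product property of the Kronecker product, $\eta\Psi_s=(v\otimes\xi)(\Delta\otimes\mathcal{B}(h))=(v\Delta)\otimes(\xi\mathcal{B}(h))=0$. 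Thus $\eta(\theta I_{Nn}-\Phi_s,\Psi_s)=0$, so the rank is below $Nn$ at $s=\theta$ and the networked system is uncontrollable.

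For the second case, suppose $(E_k,\mathcal{B}(h))$ is uncontrollable for some $k$. By the PBH test for $E_k$ there exist $\theta\in\sigma(E_k)$ and a left eigenvector $\xi$ of $E_k$ about $\theta$ with $\xi\mathcal{B}(h)=0$. Take any left eigenvector $v$ of $W$ about $\lambda_k$; as before $\eta=v\otimes\xi\neq 0$ is a left eigenvector of $\Phi_s$ about $\theta$, and $\eta\Psi_s=(v\Delta)\otimes(\xi\mathcal{B}(h))=(v\Delta)\otimes 0=0$, so the system is again uncontrollable. Combining the two cases gives the contrapositive, and hence the stated necessary condition.

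Honestly, this argument is mostly bookkeeping: the one point that needs care is verifying that the Kronecker product $v\otimes\xi$ of the relevant left eigenvectors is genuinely a left eigenvector of $\Phi_s$, but this is exactly the $\alpha_i=1$ (top-vector) instance of Lemma \ref{chara}, so it comes for free. A secondary point is to make sure each $E_i$ has at least one eigenvalue so the constructions are never vacuous, which holds since each $E_i$ is a complex square matrix; beyond that, no genuine obstacle is expected.
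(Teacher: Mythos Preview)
Your argument is correct and follows essentially the same route as the paper, which simply cites the necessity portion of the proof of Corollary~\ref{coro1}: in both cases one shows that failure of $(W,\Delta)$ or of $(E_k,\mathcal{B}(h))$ yields a nonzero left eigenvector $\eta=v\otimes\xi$ of $\Phi_s$ annihilating $\Psi_s$. Your direct Kronecker-product verification that $(v\otimes\xi)\Phi_s=\theta(v\otimes\xi)$ is in fact slightly cleaner than the paper's reference, since the necessity proof of Corollary~\ref{coro1} was written under the standing assumption that $W$ is diagonalizable, whereas your computation (or the $\eta_{ij}^1$ case of Lemma~\ref{chara}) needs no such hypothesis and so matches the generality of Corollary~\ref{coro_add3} exactly.
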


\begin{proof}
Corollary \ref{coro_add3} is induced from the necessary part of the proof of Corollary \ref{coro1}.
\end{proof}

\begin{corollary}
\label{coro2}
Assume that $W$ is singular, and $0\notin\sigma(E_i)$ for every $i=1,...,r$.
The networked system (\ref{MIMO_sampled_1},\ref{MIMO_sampled_2}) is controllable only if $(e^{Ah},\mathcal{B}(h))$ is controllable.
\end{corollary}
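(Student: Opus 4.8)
The plan is to reduce the statement to Corollary~\ref{coro_add3} after making one bookkeeping observation: the singularity of $W$ forces the matrix $e^{Ah}$ itself to appear among the subsystem matrices $E_i$. Indeed, $W$ singular means $0\in\sigma(W)$, so there is an index $i_0$ with $\lambda_{i_0}=0$, and then by definition $E_{i_0}=e^{Ah}+\lambda_{i_0}\mathcal{H}(h)=e^{Ah}$. Since the standing hypothesis $0\notin\sigma(E_i)$ for every $i$ is precisely the hypothesis of Corollary~\ref{coro_add3}, controllability of the networked system~(\ref{MIMO_sampled_1},\ref{MIMO_sampled_2}) gives that $(E_i,\mathcal{B}(h))$ is controllable for every $i=1,\dots,r$; taking $i=i_0$ yields that $(e^{Ah},\mathcal{B}(h))$ is controllable, which is the claim.

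For a self-contained argument not routed through Corollary~\ref{coro_add3}, I would instead argue by contradiction with the PBH rank test. Note first that $0\notin\sigma(E_i)$ for all $i$ makes $\Phi_s$ nonsingular, since by Lemma~\ref{thm1} its spectrum is $\bigcup_i\sigma(E_i)$; hence, by Remark~\ref{nonsin}, the rank condition $\mathrm{rank}(sI-\Phi_s,\Psi_s)=Nn$ for all $s\in\mathbb{C}$ is \emph{necessary} for controllability (not merely sufficient). Likewise $e^{Ah}$ is always nonsingular, so controllability of $(e^{Ah},\mathcal{B}(h))$ is equivalent to its PBH rank condition. Now suppose $(e^{Ah},\mathcal{B}(h))$ were uncontrollable: there exist $\theta\in\sigma(e^{Ah})$ and a row vector $\xi\neq0$ with $\xi(\theta I-e^{Ah})=0$ and $\xi\mathcal{B}(h)=0$. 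Pick a left null vector $v\neq0$ of $W$, which exists because $W$ is singular. Then $v\otimes\xi\neq0$, and using $\Phi_s=I_N\otimes e^{Ah}+W\otimes\mathcal{H}(h)$ together with $vW=0$ and $\xi e^{Ah}=\theta\xi$ one checks $(v\otimes\xi)(\theta I-\Phi_s)=0$, while $(v\otimes\xi)\Psi_s=(v\Delta)\otimes(\xi\mathcal{B}(h))=0$. Thus $(\theta I-\Phi_s,\Psi_s)$ loses row rank, contradicting controllability of~(\ref{MIMO_sampled_1},\ref{MIMO_sampled_2}).

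I do not expect any real obstacle here: the entire content is the remark that the eigenvalue $0$ of $W$ produces the subsystem matrix $e^{Ah}$, after which either Corollary~\ref{coro_add3} or the two-line Kronecker-product PBH computation closes the argument. The one point that must be stated with care is that the assumption $0\notin\sigma(E_i)$ (equivalently, $\Phi_s$ nonsingular) is exactly what legitimizes using the PBH rank condition as a \emph{necessary} controllability criterion for the discrete-time system, via Remark~\ref{nonsin}.
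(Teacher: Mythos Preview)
Your first argument is exactly the paper's own proof: observe that $W$ singular gives some $\lambda_k=0$, hence $E_k=e^{Ah}$, and then invoke Corollary~\ref{coro_add3} (whose hypothesis $0\notin\sigma(E_i)$ you have assumed) to conclude that $(e^{Ah},\mathcal{B}(h))$ must be controllable. The self-contained PBH argument you add is also correct and is a nice alternative, but it goes beyond what the paper does.
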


\begin{proof}
If $W$ is singular, there exists some $\lambda_k\in\sigma(W)$ such that $\lambda_k=0$. According to Corollary \ref{coro_add3}, $(E_k,\mathcal{B}(h))$ has to be controllable to ensure the controllability of system (\ref{MIMO_sampled_1},\ref{MIMO_sampled_2}).
Since $E_k=e^{Ah}$, the controllability of $(e^{Ah},\mathcal{B}(h))$ is necessary.
\end{proof}

Corollary \ref{coro2} can be used to effectively identify the uncontrollability of the whole networked sampled-data system by its node dynamics independently, which is demonstrated by the following example.
In this case, the uncontrollability of the whole networked system can be diagnosed even if the information about the precise network topology and inner couplings is unknown.
Note that Corollary \ref{coro2} is not a sufficient condition, since the controllability of the system (\ref{MIMO_sampled_1},\ref{MIMO_sampled_2}) is determined by multiple coupled factors. Even if $(e^{Ah},\mathcal{B}(h))$ is controllable, $W$ may have another nonzero eigenvalue $\lambda_i$, such that $(\theta_i^j{I_n}-E_i,\mathcal{B}(h))$ is not of full row rank for some $\theta_i^j\in\sigma(E_i)$, $i\in\{1,...,r\}$, $j\in\{1,...,p_i\}$.

\begin{example}
\label{exam2}
Consider a networked system consisting of three connected identical nodes with a chain structure, where $w_{21}$ and $w_{32}$ are nonzero,
$$A=\left[\begin{array}{cc}
    1 & 0 \\
    1 & 1
\end{array}\right],
B=\left[\begin{array}{cc}
    0 & 0 \\
    0 & 1
\end{array}\right],$$
and the sampling period is $h=0.1$. 
It is obvious that the eigenvalue of $W$ is $0$ and $E_1$ is nonsingular.
$$E_1=e^{Ah}=\left[\begin{array}{cc}
    1.1052 & 0 \\
    0.1105 & 1.1052
\end{array}\right],
\mathcal{B}(h)=\left[\begin{array}{cc}
    0 & 0 \\
    0 & 0.1052
\end{array}\right].$$
Then it can be calculated that
$$\begin{aligned}
&rank([sI_2-e^{Ah},\mathcal{B}(h)])\\
= \ &rank(\left[\begin{array}{cccc}
    s-1.1052 & 0 & 0 & 0 \\
    -0.1105 & s-1.1052 & 0 & 0.1052
\end{array}\right])=1<2
\end{aligned}$$
when $s=1.1052$,
by Corollary \ref{coro2}, the networked sampled-data system will be uncontrollable.
\end{example}

\begin{remark}\label{rem6}
The pathological sampling of $(A,B)$ will not inevitably result in the loss of controllability of the whole networked system, which will be shown in Example \ref{exam_new}.
Corollary \ref{coro2} essentially reveals that for the networked sampled-data system with singular $W$ and nonsingular $\Phi_s$, the pathological sampling of single node system $(A,B)$ cannot be eliminated in the network.
Especially, if $W$ only has zero eigenvalues (e.g., the tree structure, including chains and stars), the pathological sampling of $(A,B)$ will always result in the uncontrollability of the whole system.
\end{remark}

\begin{example}\label{exam_new}
Reconsider the networked sampled-data system in Fig.\ref{fig:sampled}, but extend the topology to three nodes in a cycle structure, where $\delta_1=1,w_{13}=w_{21}=w_{32}=1$, and
$$e^{Ah}=\left[\begin{array}{cc}
    -23.1407 & 0 \\
    0 & -23.1407
\end{array}\right],
\mathcal{B}(h)=\left[\begin{array}{cc}
    -12.0703 \\
    -12.0703
\end{array}\right].$$\par
The sampling is pathological about $A$, i.e., the controllability of the single node system is lost during the control sampling because
$$\begin{aligned}
&rank([sI_2-e^{Ah},\mathcal{B}(h)])\\
= \ &rank(\left[\begin{array}{ccc}
    s+23.1407 & 0 & -12.0703 \\
    0 & s+23.1407 & -12.0703
\end{array}\right])=1 
\end{aligned}$$
when $s=-23.1407$.
However, one can find that the whole networked sampled-data system is controllable by verifying Corollary \ref{coro1} as follows.\par
The eigenvalues of $W$ are $\lambda_1=1,\lambda_2=-0.5-0.866\rm{i},\lambda_3=-0.5+0.866\rm{i}$.
Then
$$\begin{aligned}
&\mathcal{H}(h)=\left[\begin{array}{cc}
    -12.0703 & 12.0703 \\
    -12.0703 & -12.0703
\end{array}\right],\\
&E_1{=}e^{Ah}{+}\lambda_1\mathcal{H}(h)=\left[\begin{array}{cc}
    -35.2110 & 12.0703 \\
    -12.0703 & -35.2110
\end{array}\right],\\
&E_2{=}e^{Ah}{+}\lambda_2\mathcal{H}(h)=\begin{bmatrix}
    -17.11+10.45\rm{i} & -6.035-10.45\rm{i} \\
    6.035+10.45\rm{i} & -17.11+10.45\rm{i}
\end{bmatrix},\\
&E_3{=}e^{Ah}{+}\lambda_3\mathcal{H}(h)=\begin{bmatrix}
    -17.11-10.45\rm{i} & -6.035+10.45\rm{i} \\
    6.035-10.45\rm{i} & -17.11-10.45\rm{i}
\end{bmatrix}.
\end{aligned}$$
It is easy to check that ($W,\Delta$) is controllable, and $rank([sI_2-E_1,\mathcal{B}(h)])$, $rank([sI_2-E_2,\mathcal{B}(h)])$, and $rank([sI_2-E_3,\mathcal{B}(h)])$ are all equal to $2$ for $\forall{s}\in\mathbb{C}$, and there are no common eigenvalues between $E_1,E_2$ and $E_3$.
\end{example}

\begin{corollary}\label{coro:dimen}
Assume that $0\notin\sigma(E_i)$ for every $i=1,...,r$. Assume that $(W,\Delta)$ is uncontrollable, and $\lambda_{k_1},...,\lambda_{k_q}$ are uncontrollable modes, corresponding to eigenvectors $v_{k_1}^1,...,v_{k_q}^1$, respectively.
Then the dimension of controllable subspace of system (\ref{MIMO_sampled_1},\ref{MIMO_sampled_2}) is no more than $Nn-\sum_{j=1}^q \mathcal{D}(E_{k_j})$.
\end{corollary}

\begin{proof}
For the state matrix $\Phi_s$ is nonsingular, the controllable subspace of $(\Phi_s,\Psi_s)$ is the column space of $\mathscr{C}=[\Phi_s,\Phi_s\Psi_s,...,\Phi_s^{Nn-1}]$, i.e., $\mathcal{R}(\mathscr{C})$.
According to Lemma \ref{chara}, if $v^1_{k_j}\Delta=0$, and $E_{k_j}$ has $\mathcal{D}(E_{k_j})$ independent eigenvectors, namely, $\xi^1_{k_j1},...,\xi^1_{k_j\mathcal{D}(E_{k_j})}$, it indicates that $\eta^1_{k_j1}=v^1_{k_j}\otimes\xi^1_{k_j1},...,\eta^1_{k_j\mathcal{D}(E_{k_j})}=v^1_{k_j}\otimes\xi^1_{k_j\mathcal{D}(E_{k_j})}$ are independent eigenvectors of $\Phi_s$, and $\eta^1_{k_jl}\Psi_s=0$ for every $l=1,...,\mathcal{D}(E_{k_j})$.
Therefore, $\eta^1_{k_jl}\in\mathcal{N}^\top(\mathscr{C})$ for every $l=1,...,\mathcal{D}(E_{k_j})$.
Consider all these eigenvectors of uncontrollable modes of $W$, i.e., $v_{k_1}^1,...,v_{k_q}^1$, one has
\begin{equation}
    \sum_{j=1}^q\mathcal{D}(E_{k_j})\leq dim(\mathcal{N}^\top(\mathscr{C}))=Nn-dim(\mathcal{R}(\mathscr{C})).\nonumber
\end{equation}
Then it comes to the result that the dimension of the controllable subspace of the system (\ref{MIMO_sampled_1},\ref{MIMO_sampled_2}) is no more than $Nn-\sum_{j=1}^q \mathcal{D}(E_{k_j})$.
\end{proof}

\section{Sampled-data systems with special topologies}
\label{sec:spe-topo}
In this section, the topology of directed trees and cycles are considered, with some easy-verified controllability conditions developed.

\subsection{Trees}
Consider a networked sampled-data system where the topology is a directed tree.
For example, a tree networked sampled-data system with six nodes is shown in Fig. \ref{fig:tree}, where nodes $1,2,5$ are under control.
The topology matrix can be written in lower triangular form with all the elements on the diagonal being zero.
Then it is obvious that all the eigenvalues of the topology matrix are zero.
Denote 
$$\sigma(\Phi_s)=\sigma(e^{Ah})=\{\sigma_1,...,\sigma_q\},1\leq{q}\leq{n}.$$
It is easy to find that $0\notin\Phi_s$, for $e^{Ah}$ is always nonsingular with arbitrary $A$ and $h$.
Therefore, the controllability conditions derived from the PBH rank condition become necessary and sufficient for systems with tree structures.\par
Note that a chain network is a tree network with only one leaf, and a star network is also a tree network, composed of one root node and multiple leaf nodes.
Examples of networked sampled-data systems with chain and star topology are shown in Fig. \ref{fig:chain} and Fig. \ref{fig:star}, respectively.
The controllability of networked sampled-data systems with these two types of topologies is analyzed as follows.

\begin{figure}[tb]
\centering
\begin{subfigure}[b]{0.5\textwidth}
  \centering
  \includegraphics[width=.55\linewidth]{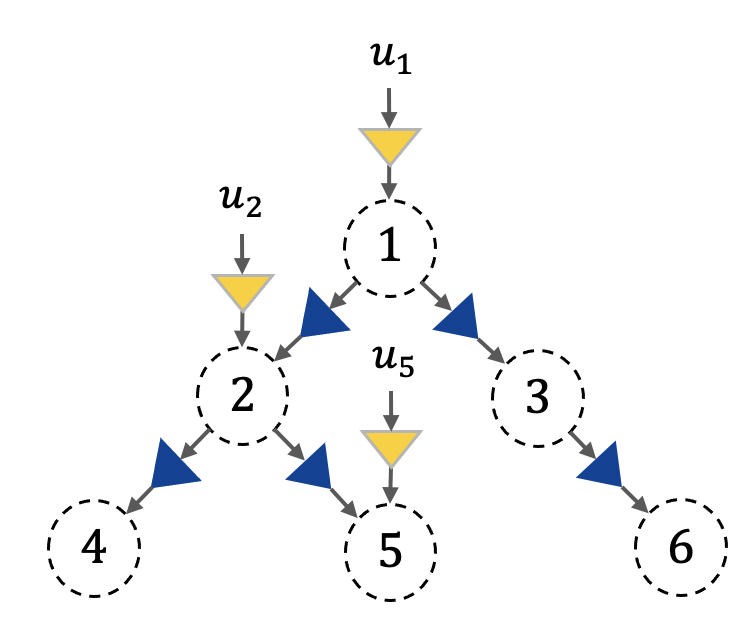}
  \caption{}
  \label{fig:tree}
\end{subfigure}
\quad
\begin{subfigure}[b]{.23\textwidth}
  \centering
  \includegraphics[width=.9\linewidth]{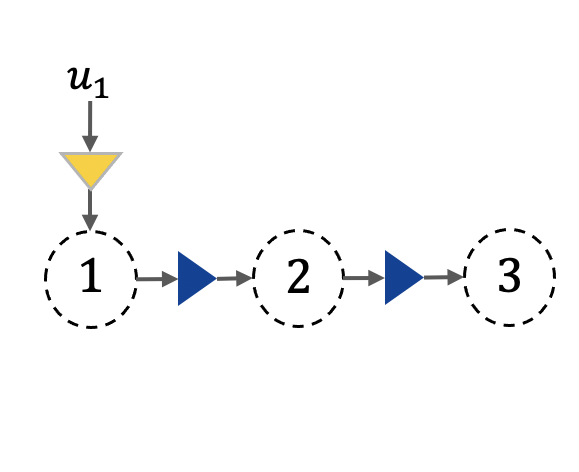}
  \caption{}
  \label{fig:chain}
\end{subfigure}
\begin{subfigure}[b]{.2\textwidth}
  \centering
  \includegraphics[width=.85\linewidth]{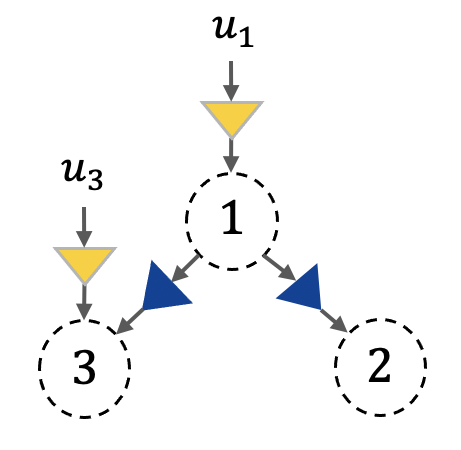} 
  \caption{}
  \label{fig:star}
\end{subfigure}
\caption{Networked sampled-data systems with (a) tree topology, (b) chain topology, (c) star topology.}
\label{fig:tree_topology}
\end{figure}

\subsubsection{Chains}
Consider a networked sampled-data system with a directed chain topology, where the topology matrix is in form of:
$$W_{chain}=\left[\begin{array}{cccc}
    0 &  & \cdots & 0\\
    w_{2,1} & 0 &  & \\
    \vdots & \ddots &\ddots & \vdots \\
    0 & \cdots & w_{N,N-1} & 0
\end{array}\right],$$
where $w_{i,i-1}$ is the weight of edge $\{i-1,i\}$, and $w_{i,i-1}\neq{0}$.
It is straightforward that the network topology is controllable if the first node is under control, i.e., $(W_{chain},\Delta)$ is controllable if $\Delta=\Delta_1$,
\begin{equation}
    \Delta_1=diag\{1,0,...,0\}.\nonumber
\end{equation}

Recall the notions of the Jordan chain and the generalized Jordan chain.
Let $v_i^1,v_i^2,...,v_i^N$ be a Jordan chain of $W_{chain}$ about $0$, and $\xi_i^1,...,\xi_i^{\gamma_i}$ be a generalized Jordan chain of $e^{Ah}$ about $\mathcal{H}(h)$ corresponding to $\sigma_i$, $i=1,...,q$.
Then an easier-to-verify controllability condition can be obtained for networked sampled-data systems with chain topology $(W_{chain},\Delta_1)$.

\begin{corollary}
The networked sampled-data system with chain topology $(W_{chain},\Delta_1)$ is controllable if and only if
$\eta\mathcal{B}(h)\neq{0}$, for $\forall\eta\in{V}_i$ and $\eta\neq{0}$, where $V_i=span\{\xi_i^1,...,\xi_i^{\beta_i}\}$, $\beta_i=min\{N,\gamma_i\}$, $i=1,...,q$.
Specially, if $\sigma_{i_1}=...=\sigma_{i_l}$, $l>1$, $V_i=span\{\xi_{i_1}^1,...,\xi_{i_1}^{\beta_{i_1}},...,\xi_{i_l}^1,...,\xi_{i_l}^{\beta_{i_l}}\}$. 
\end{corollary}

\begin{proof}
It can be calculated that the left Jordan chain of $W_{chain}$ with respect to $0$ is $v_1^1=e_1,v_1^2=(1/w_{21})e_2,...,v_1^N=(1/\prod_{k=2}^N w_{k,k-1})e_N$.
And the generalized Jordan chain of $e^{Ah}$ about $\mathcal{H}(h)$ corresponding to $\sigma_i$ is $\xi_i^1,...,\xi_i^{\gamma_i}$, $i=1,...,q$.
Similar to the proof of Theorem \ref{thm-main}, the eigenvectors of $\Phi_s$ about $\sigma_i$ are $\eta_i^1=v_1^1\otimes\xi_i^1,...,\eta_i^{\beta_i}=v_1^1\otimes\xi_i^{\beta_i}+...+v_1^{\beta_i}\otimes\xi_i^1$, where $\beta_i=min\{N,\gamma_i\}$.
According to the PBH rank condition, the networked sampled-data system is controllable if and only if $\forall{i}\in\{1,...,q\}$, $\forall\eta\in{span}\{\eta_i^1,...,\eta_i^{\beta_i}\}$ and $\eta\neq{0}$, $\eta(\Delta_1\otimes\mathcal{B}(h))\neq{0}$.
Since $\Delta_1=diag\{1,0,...,0\}$, the above condition is equivalent to that $\forall\eta\in
{span}\{\xi_i^1,...,\xi_i^{\beta_i}\}$ and $\eta\neq{0}$, $\eta\mathcal{B}(h)\neq{0}$.
Specially, if the geometric multiplicity of some eigenvalue of $e^{Ah}$ is greater than $1$, the test needs to consider all the general Jordan chains corresponding to this eigenvalue.
\end{proof}

\subsubsection{Stars}
Consider a networked sampled-data system with a directed star topology, where the topology matrix is in the form of:
$$W_{star}=\left[\begin{array}{cccc}
    0 & 0 & \cdots & 0\\
    w_{2,1} & 0 &  & \\
    \vdots & \ddots &\ddots & \vdots \\
    w_{N,1} & \cdots & 0 & 0
\end{array}\right],$$
where $w_{i,1}$ denotes the weight of edge $\{1,i\}$, and $w_{i,1}\neq{0}$.
It can be simply found that the network topology of a star network is controllable if the root and at least $N-2$ leaf nodes are under control.
Without loss of generality, let node $2$ be not under control, i.e., $(W_{star},\Delta)$ is controllable if $\Delta=\Delta_2,$
$$\Delta_2=diag\{1,0,1,...,1\}.$$
Then another easy-to-verify controllability condition can be obtained for networked sampled-data systems with star topology $(W_{star},\Delta_2)$.

\begin{corollary}\label{star}
The networked sampled-data system with chain topology $(W_{star},\Delta_2)$ is controllale if and only if $\eta\mathcal{B}(h)\neq{0}$, for $\forall\eta\in{V}_i$ and $\eta\neq{0}$, where $V_i=span\{\xi_i^1,...,\xi_i^{\beta_i}\}$, $\beta_i=min\{2,\gamma_i\}$, $i=1,...,q$.
Specially, if $\sigma_{i_1}=...=\sigma_{i_l}$, $l>1$, $V_i=span\{\xi_{i_1}^1,...,\xi_{i_1}^{\beta_{i_1}},...,\xi_{i_l}^1,...,\xi_{i_l}^{\beta_{i_l}}\}$. 
\end{corollary}

\begin{proof}
The generalized left eigenvectors of $W_{star}$ with respect to $0$ are $v_1^1=e_1,v_1^2=(1/w_{2,1})e_2,v_2^1=e_3-(w_{3,1}/w_{2,1})e_2,...,v_{N-1}^1=e_N-(w_{N,1}/w_{2,1})e_2$.
For $i\in\{1,...,q\}$, let the generalized left Jordan chain of $e^{Ah}$ about $\mathcal{H}(h)$ corresponding to $\sigma_i$ be $\xi_i^1,...,\xi_i^{\gamma_i}$.\par

If $\gamma_i=1$, the eigenvectors of $\Phi_s$ about $\sigma_i$ are $\eta_i^1=v_1^1\otimes\xi_i^1,\eta_i^2=v_2^1\otimes\xi_i^1,...,\eta_i^{N-1}=v_{N-1}^1\otimes\xi_i^1$.
According to the PBH rank condition, the networked sampled-data system is controllable if and only if $\forall{i}\in\{1,...,q\}$, $\forall\eta\in{span}\{\eta_i^1,...,\eta_i^{N-1}\}$ and $\eta\neq{0}$, $\eta(\Delta_2\otimes\mathcal{B}(h))\neq{0}$.
Since $\Delta_2=diag\{1,0,1,...,1\}$, the above condition is equivalent to that $\xi_i^1\otimes\mathcal{B}(h)\neq{0}$.\par

If $\gamma_i>1$, the eigenvectors of $\Phi_s$ about $\sigma_i$ are $\eta_i^1=v_1^1\otimes\xi_i^1,...,\eta_i^{N-1}=v_{N-1}^1\otimes\xi_i^1,\eta_i^N=v_1^2\otimes\xi_i^1+v_1^1\otimes\xi_i^2$.
According to the PBH rank condition, the networked sampled-data system is controllable if and only if $\forall{i}\in\{1,...,q\}$, $\forall\eta\in{span}\{\eta_i^1,...,\eta_i^N\}$ and $\eta\neq{0}$, $\eta(\Delta_2\otimes\mathcal{B}(h))\neq{0}$.
Since $\Delta_2=diag\{1,0,1,...,1\}$, the above condition is equivalent to that $\forall\eta\in{span}\{\xi_i^1,\xi_i^2\}$ and $\eta\neq{0}$, $\eta\mathcal{B}(h)\neq{0}$.\par
Specially, if the geometric multiplicity of some eigenvalue of $e^{Ah}$ is greater than $1$, the test needs to consider all the general Jordan chains corresponding to this eigenvalue.
In summary, the result in Corollary \ref{star} holds.
\end{proof}

\begin{corollary}\label{star_co}
Assume that $\gamma_i=1$ for every $i=1,...,q$.
The networked sampled-data system with star topology $(W_{star},\Delta_2)$ is controllable if and only if $(A,B)$ is controllable and $h$ is non-pathological about $A$.
\end{corollary}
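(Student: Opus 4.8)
The plan is to reduce the claim to Corollary \ref{star} combined with the classical non-pathological sampling theorem for single systems. First I would note that the hypothesis $\gamma_i=1$ for every $i=1,\dots,q$ forces $\beta_i=\min\{2,\gamma_i\}=1$, so in the statement of Corollary \ref{star} each subspace $V_i$ collapses to $\text{span}\{\xi_i^1\}$; moreover, in the degenerate case $\sigma_{i_1}=\dots=\sigma_{i_l}$ the subspace becomes $\text{span}\{\xi_{i_1}^1,\dots,\xi_{i_l}^1\}$, which—since each $\xi_{i_k}^1$ is a genuine left eigenvector and the top vectors of the several length-one chains span the whole left eigenspace of that eigenvalue—is exactly $M(\sigma_i\,|\,e^{Ah})$. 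Hence Corollary \ref{star} reduces to the single requirement: for every $\sigma\in\sigma(e^{Ah})$ and every nonzero left eigenvector $\xi$ of $e^{Ah}$ associated with $\sigma$, one has $\xi\,\mathcal{B}(h)\neq 0$.

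Next I would invoke the standard fact that a controllable pair $(A,B)$ remains controllable after sampling with a non-pathological period, i.e., $(e^{Ah},\mathcal{B}(h))$ is controllable whenever $(A,B)$ is controllable and $h$ is non-pathological about $A$ \cite{chen2012optimal}. By the PBH-type rank test for DLTI systems recalled in Section \ref{sec:pre}, this controllability gives $\text{rank}\,[\,sI_n-e^{Ah},\ \mathcal{B}(h)\,]=n$ for every $s\in\mathbb{C}$. Specializing $s$ to an eigenvalue $\sigma$ of $e^{Ah}$ and reading off the left null vectors of this matrix, there can be no nonzero row vector $\xi$ with $\xi(\sigma I_n-e^{Ah})=0$ and $\xi\,\mathcal{B}(h)=0$ simultaneously; equivalently, every nonzero left eigenvector $\xi$ of $e^{Ah}$ satisfies $\xi\,\mathcal{B}(h)\neq 0$. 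This is precisely the condition distilled in the first step, so Corollary \ref{star} applies and the networked sampled-data system with star topology $(W_{star},\Delta_2)$ is controllable.

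The points requiring care are bookkeeping rather than genuine obstacles: verifying that the $\gamma_i=1$ hypothesis is exactly what removes the need to control the generalized eigenvectors $\xi_i^2$ (which the non-pathological condition alone would not guarantee, since the discrete PBH rank test at $s=\sigma$ only constrains true eigenvectors), and checking that the ``special case'' clause of Corollary \ref{star} about coincident eigenvalues $\sigma_{i_1}=\dots=\sigma_{i_l}$ remains covered, since any nonzero linear combination of left eigenvectors belonging to a single eigenvalue is again a left eigenvector of that eigenvalue. Once these observations are in place the argument is immediate, and no computation beyond citing the classical sampling theorem is needed.
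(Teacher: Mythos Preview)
Your proposal is correct and follows exactly the route the paper indicates: reduce to Corollary~\ref{star} using $\gamma_i=1$ so that only genuine left eigenvectors of $e^{Ah}$ need to be tested, then invoke the classical non-pathological sampling result to conclude that $(e^{Ah},\mathcal{B}(h))$ is controllable and hence $\xi\,\mathcal{B}(h)\neq 0$ for every nonzero left eigenvector $\xi$. The paper's own proof is a one-line pointer to precisely these two ingredients, so your argument is a faithful (and more explicit) elaboration of it; the aside about the top vectors spanning the full left eigenspace is not actually needed, since any nonzero combination of left eigenvectors for a common eigenvalue is again a left eigenvector and is therefore handled by the PBH test.
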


\begin{proof}
Corollary \ref{star_co} can be proved based on the proof of Corollary \ref{star} and the non-pathological sampling condition of single systems.
\end{proof}

\begin{figure}[tb]
\centering
\includegraphics[width=.5\linewidth]{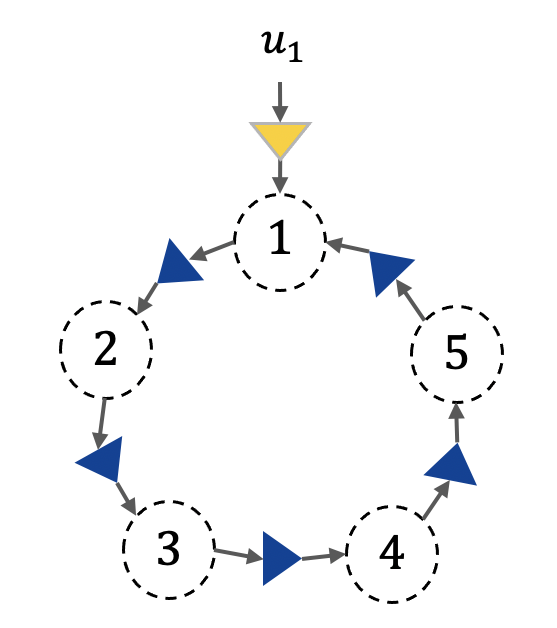} 
\caption{Networked sampled-data system with cycle topology.}
\label{fig:circle}
\end{figure}

\subsection{Cycles}
Consider a networked sampled-data system with a directed cycle topology, where the topology matrix is in form of:\par
$$W_{cycle}=\left[\begin{array}{cccc}
    0 & 0 & \cdots & w_{1,N}\\
    w_{2,1} & 0 & \cdots & 0\\
    \vdots & \ddots & \ddots & \vdots \\
    0 & \cdots & w_{N,N-1} & 0
\end{array}\right],$$
with $w_{1,N}\neq{0}$ and $w_{i,i-1}\neq{0}$.
It can be learned that only one external input added to an arbitrary node is enough for the controllability of the network topology.
Without loss of generality, assume that the external input is added to the first node, i.e., $(W_{cycle},\Delta)$ is controllable if $\Delta=\Delta_1$.
For instance, a cycle networked sampled-data system with five nodes is shown in Fig.\ref{fig:circle}, where only node $1$ is under control.\par

The characteristic polynomial of $W_{cycle}$ is
$$|\lambda{I}_N-W_{cycle}|=\lambda^N-w_{1,N}\prod_{i=2}^N w_{i,i-1}.$$
Denote $w_{1,N}\prod_{i=2}^N w_{i,i-1}$ by $\bar{w}$. 
Since $\bar{w}$ is a real number, the solutions of $|\lambda{I}_N-W_{cycle}|=0$ are:
\begin{equation}\label{eig_cir}
\lambda_i=\left\{
        \begin{array}{l}
        \sqrt[N]{\bar{w}}e^{\rm{i}\theta_i},\theta_i=\frac{2i\pi}{N},\bar{w}>0,\\
        \sqrt[N]{-\bar{w}}e^{\rm{i}\theta_i},\theta_i=\frac{(2i-1)\pi}{N},\bar{w}<0,\\
        \end{array}
        i=1,...,N.
\right.
\end{equation}

Therefore, $N$ eigenvalues of $W_{cycle}$ are different from each other, i.e., all the Jordan blocks of $W_{cycle}$ are one-dimensional.

\begin{corollary}\label{thm:circle}
The networked sampled-data system with cycle topology $(W_cycle,\Delta_1)$ is controllable if conditions (2) and (3) in Theorem \ref{thm2} hold simultaneously.
\end{corollary}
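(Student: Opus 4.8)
The plan is to reduce Corollary~\ref{thm:circle} to Theorem~\ref{thm2} by showing that condition~(1) there --- the controllability of the pair $(W_{circle},\Delta_1)$ --- holds automatically for the directed circle, so that conditions~(2) and~(3) alone become sufficient. The only genuine work is a short eigenvector computation; the rest is a direct appeal to (\ref{eig_cir}) and Theorem~\ref{thm2}.

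First I would record what (\ref{eig_cir}) already gives. Since $w_{1,N}\neq{0}$ and $w_{i,i-1}\neq{0}$, we have $\bar{w}\neq{0}$, so the $N$ roots $\lambda_1,\dots,\lambda_N$ are the $N$ distinct $N$th roots of $\bar{w}$ (or of $-\bar{w}$): they are pairwise different and all nonzero. Hence every Jordan block of $W_{circle}$ is $1\times{1}$, i.e. $W_{circle}$ is diagonalized, which is precisely the standing hypothesis of Theorem~\ref{thm2}; moreover $0\notin\sigma(W_{circle})$.

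Next I would verify condition~(1). Writing a left eigenvector $v=(v_1,\dots,v_N)$ of $W_{circle}$ for an eigenvalue $\lambda$, the equation $vW_{circle}=\lambda{v}$ reads $w_{j+1,j}v_{j+1}=\lambda{v_j}$ for $j=1,\dots,N-1$ together with $w_{1,N}v_1=\lambda{v_N}$. Since $\lambda\neq{0}$, the first $N-1$ relations force $v_{j+1}=(\lambda/w_{j+1,j})v_j$, so every component of $v$ is a nonzero multiple of $v_1$; in particular $v_1\neq{0}$, and therefore $v\Delta_1=(v_1,0,\dots,0)\neq{0}$. Because all eigenvalues of $W_{circle}$ are simple, the PBH test for $(W_{circle},\Delta_1)$ need only be checked on these ordinary eigenvectors, so $(W_{circle},\Delta_1)$ is controllable; this confirms the claim stated in the text that one input on a single node suffices.

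Finally, with condition~(1) of Theorem~\ref{thm2} established unconditionally and conditions~(2) and~(3) assumed, Theorem~\ref{thm2} yields the controllability of the networked sampled-data system $(\Phi_s,\Psi_s)$ with circle topology, which completes the proof. I do not expect a serious obstacle here: the mild care needed is just to notice that simplicity of $\sigma(W_{circle})$ reduces the PBH check for $(W_{circle},\Delta_1)$ to eigenvectors rather than generalized ones, and that $\bar{w}\neq{0}$ guarantees $\lambda\neq{0}$ so the recursion for $v$ does not degenerate.
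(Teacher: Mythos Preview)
Your proof is correct and follows the same route the paper outlines: the paper notes just before the corollary that the $N$ eigenvalues of $W_{circle}$ are distinct (hence $W_{circle}$ is diagonalized) and asserts earlier that one input on any node makes $(W_{circle},\Delta_1)$ controllable, then states the corollary ``without proof'' as a consequence of the diagonalized case. You simply make explicit the eigenvector computation behind that controllability claim and invoke Theorem~\ref{thm2}; this is exactly the intended argument, just with the details spelled out.

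One minor remark: the paper cites Corollary~\ref{coro1} rather than Theorem~\ref{thm2}, but since Corollary~\ref{coro1} carries the extra hypothesis that each $E_i$ be nonsingular (not assumed here), your choice to appeal directly to Theorem~\ref{thm2} for the sufficiency direction is the cleaner citation.
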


\begin{proof}
Since $W_{cycle}$ is diagonalizable, Corollary \ref{thm:circle} can be derived from the results in Corollary \ref{coro1}.
\end{proof}

\section{Networked sampled-data systems with special dynamics}\label{sec:spe-dyna}
In this section, two types of special node dynamics are considered: one-dimensional dynamics and self-loop-state dynamics. 

\subsection{One-dimensional dynamics}
Consider a networked system with one-dimensional dynamics, whose state matrix and input matrix are reduced to scalars denoted by $a\neq{0}$ and $b$, respectively.
Apparently, $HC$ is also a scalar, denoted by $c$,
and $a,b,c\in\mathbb{R}$.
The model of networked systems with one-dimensional mode dynamics is a special variant of the networked system introduced in Section \ref{sec:pre}.C.
In the continuous-time system (\ref{con:multi_linear}),
\begin{equation}\label{con:linear1}
    \Phi=aI_N+cW,\Psi=b\Delta,
\end{equation}
and in the related sampled-data system (\ref{MIMO_sampled_1}),
\begin{equation}\label{sampled1}
    \Phi_s=e^{ah}I_N+\frac{c}{a}(e^{ah}-1)W, \Psi_s=\frac{b}{a}(e^{ah}-1)\Delta.
\end{equation}

A controllability condition for the system (\ref{MIMO_sampled_1},\ref{sampled1}) can be derived based on Theorem \ref{thm2} as follows.

\begin{corollary}\label{thm-1}
The networked sampled-data system with one-dimensional node dynamics (\ref{MIMO_sampled_1},\ref{sampled1}) is controllable if $(W,\Delta)$ is controllable and $b\neq{0}$, $c\neq{0}$.
\end{corollary}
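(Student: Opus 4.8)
The statement follows from Theorem~\ref{thm2} once I observe that in the one-dimensional setting the three conditions there collapse to exactly ``$(W,\Delta)$ controllable, $b\neq 0$, $c\neq 0$.'' First I would check the hypothesis of Theorem~\ref{thm2}: since $a,c\in\mathbb{R}$ and $W\in\mathbb{R}^{N\times N}$, the matrix $W$ need not be diagonalizable in general, so strictly I should note that here $n=1$, hence each $E_i$ is a scalar and the ``Jordan blocks of $\Phi_s$'' argument of Theorem~\ref{thm_Add2} applies directly; alternatively, one invokes Theorem~\ref{thm2} under the tacit reading that $W$ is diagonalizable, or simply re-runs the PBH computation. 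I will take the direct PBH route to avoid the diagonalizability caveat. From \eqref{sampled1}, $\Phi_s = e^{ah}I_N + \tfrac{c}{a}(e^{ah}-1)W$ and $\Psi_s = \tfrac{b}{a}(e^{ah}-1)\Delta$.

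Next I would compute the eigenstructure. If $\sigma(W)=\{\lambda_1,\dots,\lambda_r\}$ with left Jordan chains $v_i^1,\dots,v_i^{\alpha_i}$, then by Lemma~\ref{thm1} (with $n=1$, so $E_i = e^{ah} + \tfrac{c}{a}(e^{ah}-1)\lambda_i$ is a scalar) the eigenvalues of $\Phi_s$ are $\theta_i = e^{ah} + \tfrac{c}{a}(e^{ah}-1)\lambda_i$, and by Lemma~\ref{chara} the left eigenvectors of $\Phi_s$ associated with $\theta_i$ are spanned by the top vectors $v_i^1$ of the corresponding Jordan chains (the $\xi$-chains are trivial since the blocks $E_i$ are $1\times 1$). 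The key arithmetic point is that the map $\lambda \mapsto e^{ah} + \tfrac{c}{a}(e^{ah}-1)\lambda$ is \emph{affine and injective} precisely because $c\neq 0$ and $e^{ah}-1\neq 0$ (the latter holds since $a\neq 0$ and, for real $a$, $ah$ is real so $e^{ah}\neq 1$). Hence distinct eigenvalues of $W$ give distinct eigenvalues of $\Phi_s$, and an eigenvector of $\Phi_s$ for $\theta_i$ is an eigenvector of $W$ for $\lambda_i$ and conversely — so condition~(3) of Theorem~\ref{thm2} (the ``common eigenvalue'' case) never arises.

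Then I would run the PBH test for $\Phi_s$ directly: by the criterion for DLTI systems, controllability holds if for every $\theta\in\sigma(\Phi_s)$ and every left eigenvector $\eta\neq 0$ of $\Phi_s$ at $\theta$, one has $\eta\Psi_s\neq 0$. Fix $\theta = \theta_i$; then $\eta$ lies in the span of the $v_i^k$'s, and since $W$'s generalized eigenspace for $\lambda_i$ is mapped by the affine relabeling onto $\Phi_s$'s generalized eigenspace for $\theta_i$, any such $\eta$ satisfies $\eta(\lambda_i I_N - W) = 0$ when $\eta$ is an actual (not generalized) eigenvector. Now $\eta\Psi_s = \tfrac{b}{a}(e^{ah}-1)\,\eta\Delta$. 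Since $(W,\Delta)$ is controllable, the PBH test for $(W,\Delta)$ gives $\eta\Delta\neq 0$ for every left eigenvector $\eta$ of $W$; combined with $b\neq 0$ and $e^{ah}-1\neq 0$, this yields $\eta\Psi_s\neq 0$. This establishes controllability of \eqref{sampled1}.

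\textbf{Main obstacle.} The only real subtlety is bookkeeping around the Jordan structure of $W$ when $W$ is not diagonalizable: I must make sure the PBH eigenvector test is applied with genuine eigenvectors (top vectors of the Jordan chains), and that the affine map $\lambda\mapsto e^{ah}+\tfrac{c}{a}(e^{ah}-1)\lambda$ carries the left eigenvectors of $W$ exactly to the left eigenvectors of $\Phi_s$ — which it does because $\Phi_s$ is an affine function of $W$ with nonzero slope, so $W$ and $\Phi_s$ have identical eigenspaces and the same invariant-subspace lattice. Everything else ($e^{ah}\neq 1$, the factor $b\neq 0$) is routine, and the ``$c\neq 0$'' hypothesis is what guarantees the slope is nonzero so that the network structure is faithfully reflected in $\Phi_s$.
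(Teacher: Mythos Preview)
Your proposal is correct and follows essentially the same route as the paper: both hinge on the observation that $\Phi_s = e^{ah}I_N + \tfrac{c}{a}(e^{ah}-1)W$ is an affine function of $W$ with nonzero slope (equivalently, the paper's $E_i=E_j\iff\lambda_i=\lambda_j$ when $c\neq 0$), so the left eigenvectors of $\Phi_s$ coincide with those of $W$, and then the PBH test combined with controllability of $(W,\Delta)$ and $b\neq 0$ finishes the argument. Your presentation is slightly more direct in that it avoids writing out the Jordan similarity transformation explicitly, but the logical content is identical.
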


\begin{proof}
Similar to the derivation of Theorem \ref{thm2}, system (\ref{MIMO_sampled_1},\ref{sampled1}) is controllable if
$$\begin{aligned}
&[sI_N-e^{ah}I_N-\frac{c}{a}(e^{ah}-1)W,\frac{b}{a}(e^{ah}-1)\Delta]\\
=&T^{-1}[(s-e^{ah})I_N-\frac{c}{a}(e^{ah}-1)J,bT\Delta]\\
&\left[ \begin{array}{cc}
T & 0\\
0 & \frac{1}{a}(e^{ah}-1)I_N
\end{array}
\right ]\\
=&T^{-1}
(sI_N-\mathcal{F},\mathcal{G} )
\left[ \begin{array}{cc}
T & 0\\
0 & \frac{1}{a}(e^{ah}-1)I_N
\end{array}
\right ]
\end{aligned}$$
is of full row rank, where
$\mathcal{F}=diag\{F_1,...F_r\}, \mathcal{G}=[G_1^\top,...G_r^\top]^\top,$ and for $i=1,2,...,r$,
$$F_i=\left[\begin{matrix}
E_i & -\frac{c}{a}(e^{ah}-1) & & \\
         &\ddots &\ddots & \\
         &       & E_i & -\frac{c}{a}(e^{ah}-1)\\
         &       &          & E_i \end{matrix}\right]_{(\alpha_i\times{\alpha_i})}$$
$$G_i=\left[ \begin{array}{c}
bd_i^{\alpha_i}\\
\vdots\\
bd_i^1
\end{array}\right], 
E_i=e^{ah}+\lambda_i{\frac{c}{a}(e^{ah}-1)}\in\mathbb{C}.$$
\par

Then prove by contradiction.
Assume that system (\ref{MIMO_sampled_1},\ref{sampled1}) is uncontrollable.
Since $E_1,...,E_r$ are scalars, $\sigma(\Phi_s)=\{E_1,...,E_r\}$.
Consider $E_k$, $k\in\{1,...,r\}$.
If $[E_kI_N-\mathcal{F},\mathcal{G}]$ is not of full rank, there exists some nonzero $\eta=[\eta_1,...,\eta_r]$, such that $\eta\mathcal{G}=0$,
where $\eta_i=a_ie_{\alpha_i}\in\mathbb{C}^{1\times{\alpha_i}}$, $a_i$ is an arbitrary nonzero complex number if $E_i=E_k$, otherwise $a_i=0$, $i=1,...,r$.
If $a,b,c\neq{0}$, $\forall{i,j}\in\{1,...,r\}$, $i\neq{j}$, $E_i=E_j$ if and only if $\lambda_i=\lambda_j$.
Therefore, $\eta\mathcal{G}=0$ is equivalent to $\sum_{i=1}^r a_id_i^1=(\sum_{i=1}^r a_iv_i^1)\Delta=0$, indicating that $(W,\Delta)$ is uncontrollable.
\end{proof}

\begin{remark}
Note that $c\neq{0}$ in Corollary \ref{thm-1} is not necessary.
Assume that $c=0$,
let $\Delta=I_N$, system (\ref{MIMO_sampled_1},\ref{sampled1}) is still controllable.
It shows that when the transmission channels are cut off, external inputs must be added to each node to maintain the controllability of the overall system.
This also reveals the enhanced effects of interactions between nodes on the controllability of networked sampled-data systems.
\end{remark}

Note that $e^{ah}$ is always a nonzero scalar.
It seems that no effects of sampling are reflected in the controllability of the system (\ref{MIMO_sampled_1},\ref{sampled1}).
In fact, when the node dynamics are one-dimensional, the periodic sampling does not influence the controllability of the networked system, which is shown in Corollary \ref{thm:1state}.
Therefore, the controllability of the networked sampled-data system can be directly inferred from its original continuous-time system.

\begin{corollary}
\label{thm:1state}
The networked sampled-data system with one-dimensional node dynamics (\ref{MIMO_sampled_1},\ref{sampled1}) is controllable if its original continuous-time system (\ref{con:multi_linear},\ref{con:linear1}) is controllable.
\end{corollary}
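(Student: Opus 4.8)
The plan is to reduce the PBH rank condition of the sampled system $(\Phi_s,\Psi_s)$ in (\ref{MIMO_sampled_1},\ref{sampled1}) to that of the continuous system $(\Phi,\Psi)$ in (\ref{con:multi_linear},\ref{con:linear1}) by a change of the spectral variable together with a nonzero scalar rescaling, and then to invoke the fact recalled in Section~\ref{sec:pre} that the PBH rank condition, although only sufficient for DLTI controllability, does imply it.

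First I would record the single analytic fact that is needed: since $a\in\mathbb{R}\setminus\{0\}$ and the sampling period $h>0$, the number $ah$ is a nonzero real, so $e^{ah}\neq 1$. Consequently $\mu:=\tfrac{1}{a}(e^{ah}-1)$ is a nonzero scalar, and (\ref{sampled1}) can be written as $\Phi_s=e^{ah}I_N+c\mu W$, $\Psi_s=b\mu\,\Delta$, while $\Phi=aI_N+cW$, $\Psi=b\Delta$.

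Next, for an arbitrary $s\in\mathbb{C}$ I would set $\hat{s}:=\tfrac{1}{\mu}(s-e^{ah})$ and compute
$$[\,sI_N-\Phi_s,\ \Psi_s\,]=[\,\mu\hat{s}\,I_N-c\mu W,\ b\mu\,\Delta\,]=\mu\,[\,(\hat{s}+a)I_N-\Phi,\ \Psi\,].$$
As $s$ ranges over $\mathbb{C}$, so does $\hat{s}+a$; controllability of the continuous system gives, by the PBH criterion, full row rank of $[\,(\hat{s}+a)I_N-\Phi,\ \Psi\,]$ for every value of $\hat{s}+a$, and multiplication by the nonzero scalar $\mu$ preserves rank. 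Hence $\mathrm{rank}[\,sI_N-\Phi_s,\ \Psi_s\,]=N$ for all $s\in\mathbb{C}$, which by the sufficiency of the PBH rank test for DLTI systems makes the sampled system controllable. An equivalent route goes through Corollary~\ref{thm-1}: if $c\neq0$, continuous controllability of $(aI_N+cW,b\Delta)$ forces $b\neq0$ and, by the same shift/rescale identity, controllability of $(W,\Delta)$, so Corollary~\ref{thm-1} applies directly; if $c=0$, continuous controllability of $(aI_N,b\Delta)$ forces $b\Delta$ to be of full row rank, whence so is $\Psi_s=b\mu\,\Delta$ and the sampled system is controllable trivially.

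I do not anticipate a genuine difficulty here; the two points that need care are (i) the justification that $e^{ah}\neq1$, so that dividing by $\mu$ is legitimate, and (ii) keeping in mind that for DLTI systems the PBH rank condition is only sufficient, so what is obtained is exactly the one-way implication stated in the corollary rather than an equivalence.
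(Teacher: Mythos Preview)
Your argument is correct and follows essentially the same approach as the paper: both proofs rest on the same affine change of spectral variable $s\mapsto a\cdot\frac{s-1}{e^{ah}-1}$ together with the nonzero scalar factor $\mu=\tfrac{1}{a}(e^{ah}-1)$, which carries the PBH test of $(\Phi_s,\Psi_s)$ to that of $(\Phi,\Psi)$. The paper phrases this as a contrapositive (a PBH failure for the sampled system produces one for the continuous system), whereas you run the identity $[\,sI_N-\Phi_s,\ \Psi_s\,]=\mu\,[\,(\hat{s}+a)I_N-\Phi,\ \Psi\,]$ directly; the content is identical, and your version makes the rank preservation and the role of $e^{ah}\neq 1$ slightly more explicit.
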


\begin{proof}
Proof by contradiction.
If system (\ref{MIMO_sampled_1},\ref{sampled1}) is uncontrollable, there exists some $\xi\in\mathbb{C}^{1\times{N}}$ and $s_1\in\mathbb{C}$, such that $(s_1-e^{ah})\xi=\frac{c}{a}(e^{ah}-1)\xi{W}$ and $\frac{e^{ah}-1}{a}b\xi\Delta=0$ simultaneously.
System (\ref{con:multi_linear},\ref{con:linear1}) is uncontrollable if there exists some $\xi\in\mathbb{C}^{1\times{N}}$ and $s_0\in\mathbb{C}$, satisfying $s_0\xi=\xi(aI_N+cW)$ and $\xi\Delta=0$ simultaneously.
The latter condition is met by setting $s_0=a(1-\frac{s_1-e^{ah}}{1-e^{ah}})$, which completes the proof.
\end{proof}

\begin{corollary}\label{rem:1state}
If $0\notin\sigma(\Phi_s)$, the networked sampled-data system with one-dimensional node dynamics (\ref{MIMO_sampled_1},\ref{sampled1}) is controllable if and only if its original continuous-time system (\ref{con:multi_linear},\ref{con:linear1}) is controllable.
\end{corollary}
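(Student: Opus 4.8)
The plan is to split the equivalence into its two implications. The ``if'' direction — controllability of the continuous system (\ref{con:multi_linear},\ref{con:linear1}) implies controllability of the sampled-data system (\ref{MIMO_sampled_1},\ref{sampled1}) — is exactly Corollary \ref{thm:1state} and needs no new argument, and indeed holds without the hypothesis $0\notin\sigma(\Phi_s)$. So the whole content of the proof is the ``only if'' direction: under $0\notin\sigma(\Phi_s)$, controllability of the sampled-data system forces controllability of the continuous one, which I would establish in contrapositive form.

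Assume the continuous system is uncontrollable. The continuous-time PBH criterion then supplies a nonzero row vector $\xi$ and a scalar $s_0$ with $\xi(s_0I_N-\Phi)=0$ and $\xi\Psi=0$, i.e. $\xi(aI_N+cW)=s_0\xi$ and $b\,\xi\Delta=0$. The first equation shows $\xi$ is also a left eigenvector of $\Phi_s=e^{ah}I_N+\tfrac{c}{a}(e^{ah}-1)W$: when $c\neq0$ it yields $\xi W=\tfrac{s_0-a}{c}\xi$, and when $c=0$ one simply has $\Phi_s=e^{ah}I_N$, so in either case $\xi\Phi_s=s_1\xi$ with $s_1=1+\tfrac{e^{ah}-1}{a}s_0$. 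The same bookkeeping that already appears in the proof of Corollary \ref{thm:1state} shows $\xi\Psi_s=\tfrac{b}{a}(e^{ah}-1)\,\xi\Delta=0$ (if $b\neq0$ then $\xi\Delta=0$ from $b\,\xi\Delta=0$; if $b=0$ then $\Psi_s=0$ outright). Hence $\xi\bigl[s_1I_N-\Phi_s,\ \Psi_s\bigr]=0$, so the PBH matrix of the sampled-data system loses rank at $s=s_1$.

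The delicate point — and the only place the hypothesis $0\notin\sigma(\Phi_s)$ enters — is that for a DLTI system a rank drop of $[sI_N-\Phi_s,\Psi_s]$ characterizes loss of \emph{reachability}, and coincides with loss of controllability only away from $s=0$; a rank drop occurring solely at $s=0$ would be compatible with controllability. But $s_1$ is a left eigenvalue of $\Phi_s$, hence $s_1\in\sigma(\Phi_s)$, and since $\Phi_s$ is nonsingular by assumption, $s_1\neq0$. A rank deficiency at a nonzero point therefore does contradict controllability of the sampled-data system, completing the contrapositive. I expect this reachability-versus-controllability subtlety to be the main (essentially the only) obstacle; once it is pinned down, the affine relation between $s_0$ and $s_1$ and the check $\xi\Psi_s=0$ are routine and already implicit in the earlier proofs.
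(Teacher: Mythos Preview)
Your proposal is correct and follows essentially the same route as the paper: the ``if'' direction is deferred to Corollary~\ref{thm:1state}, and the ``only if'' direction is proved in contrapositive by taking a continuous-time PBH witness $(\xi,s_0)$ and pushing it through the affine relation $s_1=1+\tfrac{e^{ah}-1}{a}s_0$ to obtain a discrete-time PBH witness, with $0\notin\sigma(\Phi_s)$ ensuring $s_1\neq 0$ so that the rank drop indeed contradicts controllability rather than merely reachability. If anything, your treatment is slightly more careful than the paper's in separating out the degenerate cases $b=0$ and $c=0$ and in making the reachability/controllability distinction explicit.
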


\begin{proof}
Since $\sigma(\Phi_s)=\{E_1,E_2,...,E_r\}$, $0\notin\sigma(\Phi_s)$ is equivalent to: $\forall{i}\in\{1,...,r\}$, $c\lambda_i\neq\frac{ae^{ah}}{(1-e^{ah})}$.
In this case, if $[(s_1-e^{ah})I_N-\frac{c}{a}(e^{ah}-1)W,\frac{b}{a}(e^{ah}-1)\Delta]$ is not of full row rank, system (\ref{MIMO_sampled_1},\ref{sampled1}) is uncontrollable.
Assume that the related continuous system (\ref{con:multi_linear},\ref{con:linear1}) is uncontrollable, then there exists some $\xi\in\mathbb{C}^{1\times{N}}$ and $s_0\in\mathbb{C}$ such that $s_0\xi=\xi(aI+cW)$ and $\xi\Delta=0$ simultaneously.
Let $s_1=1-\frac{s_0}{a}(1-e^{ah})$, it follows that $\xi[(s-e^{ah})I_N-\frac{c}{a}(e^{ah}-1)W,\frac{b}{a}(e^{ah}-1)\Delta]=[(s-e^{ah})\xi-s_0\frac{e^{ah}}{a}\xi,0]=0$,
which lead to the uncontrollability of system (\ref{MIMO_sampled_1},\ref{sampled1}).
\end{proof}

\subsection{Self-loop-state dynamics}

Consider a networked sampled-data system with identical multi-dimensional self-loop node dynamics, i.e., $A=I_n$.
In such a networked system, there are no interactions between the internal states of each node.
The dynamics will become 
\begin{equation}\label{con:loop}
  \Phi=I_{Nn}+W\otimes{HC}, \Psi=\Delta\otimes{B},
\end{equation}
for the continuous-time system (\ref{con:multi_linear}), 
and for the corresponding sampled-data system (\ref{MIMO_sampled_1}), 
\begin{equation}\label{sam:loop}
\begin{aligned}
  &\Phi_s=e^hI_{Nn}+(e^h-1)W\otimes{HC},\\
  &\Psi_s=(e^h-1)(\Delta\otimes{B}).
\end{aligned}
\end{equation}

\begin{corollary}\label{coro:loop}
The networked sampled-data system with self-loop-state node dynamics (\ref{MIMO_sampled_1},\ref{sam:loop}) is controllable if its original continuous-time system (\ref{con:multi_linear},\ref{con:loop}) is controllable.
\end{corollary}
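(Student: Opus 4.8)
The plan is to mirror the contradiction argument already used for Corollary~\ref{thm:1state}, exploiting the fact that the self-loop dynamics $A=I_n$ collapses both $\Phi_s$ and $\Psi_s$ into affine expressions in $W\otimes HC$ and $\Delta\otimes B$ sharing the common scalar factor $e^{h}-1$. First I would invoke the rank criterion recalled in Section~\ref{sec:pre}: since the discrete-time PBH-type rank condition is \emph{sufficient} for controllability of a DLTI system, it is enough to show that $(sI_{Nn}-\Phi_s,\Psi_s)$ has full row rank $Nn$ for every $s\in\mathbb{C}$, and then conclude controllability of system~(\ref{MIMO_sampled_1},\ref{sam:loop}).

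Next I would argue by contradiction. Suppose full row rank fails at some $s_1\in\mathbb{C}$; then there is a nonzero row vector $\xi\in\mathbb{C}^{1\times Nn}$ with $\xi(s_1 I_{Nn}-\Phi_s)=0$ and $\xi\Psi_s=0$. Substituting (\ref{sam:loop}) yields $(s_1-e^{h})\xi=(e^{h}-1)\,\xi(W\otimes HC)$ together with $(e^{h}-1)\,\xi(\Delta\otimes B)=0$. Since $h>0$ gives $e^{h}-1\neq 0$, this forces $\xi$ to be a left eigenvector of $W\otimes HC$ with eigenvalue $\mu=(s_1-e^{h})/(e^{h}-1)$, and $\xi(\Delta\otimes B)=0$.

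Then from (\ref{con:loop}) we get $\xi\Phi=\xi+\xi(W\otimes HC)=(1+\mu)\xi$, so $\xi$ is a left eigenvector of $\Phi$ with eigenvalue $s_0=1+\mu=(s_1-1)/(e^{h}-1)$, while simultaneously $\xi\Psi=\xi(\Delta\otimes B)=0$. Hence $(s_0 I_{Nn}-\Phi,\Psi)$ is not of full row rank, so by the PBH criterion the continuous system~(\ref{con:multi_linear},\ref{con:loop}) is uncontrollable, contradicting the hypothesis. One should also remark that the correspondence $s_1\mapsto s_0$ is a bijection of $\mathbb{C}$ because $e^{h}-1\neq 0$, so no point of the spectrum is lost in the translation.

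I do not expect a genuine obstacle here: the computation is a direct algebraic reparametrization, essentially identical to the one-dimensional case of Corollary~\ref{thm:1state} with the scalar $a$ replaced by $1$ and the scalar $c$ replaced by the matrix $HC$ inside the Kronecker product. The only point requiring care is the reachability-versus-controllability gap for DLTI systems flagged in Section~\ref{sec:pre}: because we are only proving a \emph{sufficient} condition ("controllable if \ldots controllable"), it is legitimate to work entirely with the full-row-rank PBH condition and we never need the converse rank statement.
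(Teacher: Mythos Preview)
Your proposal is correct and follows essentially the same approach as the paper: a contrapositive argument in which a left eigenvector $\xi$ witnessing failure of the PBH rank condition for $(\Phi_s,\Psi_s)$ at some $s_1$ is shown, via the affine reparametrization $s_0=1+(s_1-e^h)/(e^h-1)$, to witness failure of the PBH condition for $(\Phi,\Psi)$. Your added remarks about $e^h-1\neq 0$, the bijectivity of $s_1\mapsto s_0$, and the sufficiency direction of the DLTI rank test are sound and in fact make the argument slightly more explicit than the paper's version.
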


\begin{proof}
Similar to the proof of Corollary \ref{thm:1state}, assume that the sampled-data system (\ref{MIMO_sampled_1},\ref{sam:loop}) is uncontrollable.
Then there exists some $\xi\in\mathbb{C}^{1\times{Nn}}$ and $s_1\in\mathbb{C}$ such that
$(e^h-1)\xi(W\otimes{HC})=(s_1-e^h)\xi$ and $(e^h-1)\xi(\Delta\otimes{B})=0$ simultaneously.
Let $s_0=\frac{s_1-e^h}{e^h-1}+1$, it follows that $\xi({W\otimes{HC}})=(s_0-1)\xi$
and
$\xi(\Delta\otimes{B})=0$ at the same time,
which indicates the corresponding continuous-time system (\ref{con:multi_linear},\ref{con:loop}) is also uncontrollable.
\end{proof}

\begin{corollary}
If $0\notin\sigma(\Phi_s)$, the networked sampled-data system with self-loop-state node dynamics (\ref{MIMO_sampled_1},\ref{sam:loop}) is controllable if and only if its original continuous-time system (\ref{con:multi_linear},\ref{con:loop}) is controllable.
\end{corollary}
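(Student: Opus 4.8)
The plan is to split the equivalence into its two directions and reuse what is already available. The \emph{if} direction is immediate: it is exactly Corollary \ref{coro:loop}, which asserts that controllability of the continuous system (\ref{con:multi_linear},\ref{con:loop}) implies controllability of the sampled-data system (\ref{MIMO_sampled_1},\ref{sam:loop}) with no restriction on $\sigma(\Phi_s)$. So the only genuinely new content is the \emph{only if} direction, and this is precisely where the hypothesis $0\notin\sigma(\Phi_s)$ must be used.

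For the \emph{only if} direction I would argue by contraposition: assume the continuous system (\ref{con:multi_linear},\ref{con:loop}) is uncontrollable and deduce that the sampled-data system (\ref{MIMO_sampled_1},\ref{sam:loop}) is uncontrollable. By the PBH criterion there exist $\xi\in\mathbb{C}^{1\times Nn}$, $\xi\neq 0$, and $s_0\in\mathbb{C}$ with $\xi(W\otimes HC)=(s_0-1)\xi$ and $\xi(\Delta\otimes B)=0$ simultaneously. The key computation is then to feed this same $\xi$ into $\Phi_s=e^h I_{Nn}+(e^h-1)(W\otimes HC)$: one gets $\xi\Phi_s=\bigl[e^h+(e^h-1)(s_0-1)\bigr]\xi$, so setting $s_1=e^h+(e^h-1)(s_0-1)$ (the inverse of the substitution used in the proof of Corollary \ref{coro:loop}, well defined because $h>0$ forces $e^h-1\neq 0$) yields $\xi(s_1 I_{Nn}-\Phi_s)=0$; and $\xi\Psi_s=(e^h-1)\,\xi(\Delta\otimes B)=0$. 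Hence $(s_1 I_{Nn}-\Phi_s,\Psi_s)$ fails to have full row rank.

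The final step is where $0\notin\sigma(\Phi_s)$ becomes essential. In general, for a DLTI system the PBH rank condition is only \emph{sufficient} for controllability, since the reachable and controllable subspaces may differ; but, as recorded in Remark \ref{nonsin} following \cite{hespanha2018linear}, when $\Phi_s$ is nonsingular the two subspaces coincide and the rank condition is also \emph{necessary}. Therefore, under $0\notin\sigma(\Phi_s)$ the failure of full row rank at $s=s_1$ certifies that the sampled-data system is uncontrollable, which completes the contrapositive and hence the equivalence. I do not expect any real obstacle here; the only point requiring care is to invoke the nonsingularity of $\Phi_s$ exactly where it is needed — to upgrade PBH from sufficient to necessary — and not to lean on it anywhere in the elementary algebra.
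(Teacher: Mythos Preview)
Your proposal is correct and follows essentially the same route as the paper: the \emph{if} direction is Corollary \ref{coro:loop}, and the \emph{only if} direction is handled in the paper by the Remark immediately following the corollary, which assumes the continuous system is uncontrollable, takes the same PBH witness $\xi$, and sets $s_1=(e^h-1)s_0+1$ --- algebraically identical to your $s_1=e^h+(e^h-1)(s_0-1)$. If anything, your write-up is slightly more careful than the paper's in that you explicitly identify the role of the hypothesis $0\notin\sigma(\Phi_s)$ (making PBH necessary as well as sufficient for the DLTI system), whereas the paper's Remark leaves this implicit.
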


\begin{proof}
Suppose that the continuous-time system (\ref{con:multi_linear},\ref{con:loop}) is uncontrollable, then there exist some $\xi\in\mathbb{C}^{1\times{Nn}}$ and $s_0\in\mathbb{C}$ satisfying
$\xi({W\otimes{HC}})=(s_0-1)\xi$ and
$\xi(\Delta\otimes{B})=0$ simultaneously.
Let $s_1=(e^h-1)s_0+1$, it follows that
$\xi(s_1-(e^hI_{Nn}+W\otimes\mathcal{H}(h)))=0$ and $\xi(\Delta\otimes\mathcal{B}(h))=(e^h-1)\xi\Delta\otimes{B}=0$ hold simultaneously.
Therefore, the corresponding sampled-data system (\ref{MIMO_sampled_1},\ref{sam:loop}) is also uncontrollable.\par
\end{proof}

\section{Networked multi-rate sampled-data systems}\label{sec:multi}
The sampling period on control channels may be different from that on transmission channels in the networked system.
According to the multiple relationships between the two sampling periods, the networked multi-rate sampled-data systems are divided into transmission multi-rate sampled-data (TMS) systems and control multi-rate sampled-data (CMS) systems.
The controllability of these two elementary types of multi-rate sampled-data systems is discussed in this section.

\subsection{Networked TMS systems}
Here, a networked TMS system means that all nodes are sampled periodically at the same on both transmission channels and control channels, but the control sampling period is an integer multiple of the transmission sampling period.
As illustrated in Figure \ref{fig:TMS}, in the TMS pattern $h$ is the transmission sampling period and $lh$ is the control sampling period, with $l$ being a positive integer.\par
The dynamics of node $i$ in the TMS system are represented as:
\begin{equation}
\left\{
        \begin{array}{l}
        \dot{x}_i(t)=Ax_i(t)+\sum_{j=1}^N{w_{ij}Hy_j((kl+r)h)}+\delta_iBu_i(klh)\\
        y_i((kl+r)h)=Cx_i((kl+r)h)\\
        \end{array}
\right.
\end{equation}
where $t\in((kl+r)h,(kl+r+1)h]$, $r=0,1,...,l-1$ and $k\in\mathbb{N}$.
The associated compact form can be written as
\begin{equation}\label{TMS_1}
    X((kl+r+1)h)=\Phi_s{X}((kl+r)h)+\Psi_s{U(klh)}.
\end{equation}

Furthermore, system (\ref{TMS_1}) can be transformed into a DLTI system with sampling period $lh$:
\begin{equation}\label{TMS}
    X((k+1)lh)=\tilde{\Phi}_s{X}(klh)+\tilde{\Psi}_s{U(klh)}\\
\end{equation}
with
\begin{equation}\label{TMS_detail}
    \tilde{\Phi}_s=\Phi_s^l, \tilde{\Psi}_s=(\Phi_s^{l-1}+\Phi_s^{l-2}+...+\Phi_s+I_{Nn})\Psi_s.
\end{equation}\par

By Lemmas \ref{lem1}-\ref{chara}, the eigenvalues and the corresponding eigenspace of $\Phi_s$ can be obtained.
Then the eigenvalues and eigenspace of $\tilde{\Phi}_s$ can also be calculated by Lemma \ref{eigenspace_TMS}.

\begin{lemma}\label{eigenspace_TMS}
Assume that $\sigma(\Phi_s)=\{\theta_1^1,...,\theta_1^{p_1},...\theta_r^1,...\theta_r^{p_r}\}$, then $\sigma(\tilde{\Phi}_s)=\{(\theta_1^1)^l,...,(\theta_1^{p_1})^l,...,(\theta_r^1)^l,...,(\theta_r^{p_r})^l\}$.
Moreover, if $\theta_i^j\neq{0}$, $M(\theta_i^j|\Phi_s)= M((\theta_i^j)^l|\tilde{\Phi}_s)$ for $i=1,2,...,r$, $j=1,2,...,p_i$.
Specially, if $(\theta_{i_1}^{j_1})^l=(\theta_{i_2}^{j_2})^l=...=(\theta_{i_q}^{j_q})^l\triangleq(\theta_i^j)^l$, where $i_k\in\{1,...,r\}$ and $j_k\in\{1,...,p_{i_k}\}$, $k=1,...,q$, $q>1$, $\odot_{k=1}^q{M(\theta_{i_k}^{j_k}|\Phi_s)}= M((\theta_i^j)^l|\tilde{\Phi}_s)$.
\end{lemma}

\begin{proof}
Given $\sigma(\Phi_s)=\{\theta_1^1,...,\theta_1^{p_1},...,\theta_r^1,...,\theta_r^{p_r}\}$ and $\tilde{\Phi}_s=\Phi_s^l$, $\sigma(\tilde{\Phi}_s)$ can be easily obtained according to the spectral mapping theorem.
In addition, $\forall\eta\in{M}(\theta_i^j|\Phi_s)$, $\eta\Phi_s^l=(\theta_i^j)^l\eta$, and therefore $\eta\in{M}((\theta_i^j)^l|(\Phi_s)^l)$.
Specially, assume that $(\theta_{i_1}^{j_1})^l=(\theta_{i_2}^{j_1})^l=...=(\theta_{i_q}^{j_q})^l\triangleq(\theta_i^j)^l$, and consider $\eta_1\in{M}(\theta_{i_1}^{j_1}|\Phi_s),...,\eta_q\in{M}(\theta_{i_q}^{j_q}|\Phi_s)$.
For any linear combination $\eta=\sum_{j=1}^q a_j\eta_j$, it follows that $\eta\Phi_s^l=\sum_{j=1}^q a_j\eta_j\Phi_s^l=(\theta_i^j)^l\sum_{j=1}^q a_j\eta_j=(\theta_i^j)^l\eta$, i.e., $\eta\in{M}((\theta_i^j)^l|\Phi_s^l)$.
Thus $\odot_{k=1}^q{M(\theta_{i_k}^{j_k}|\Phi_s)}\subset M((\theta_i^j)^l|\tilde{\Phi}_s)$.
\end{proof}

\begin{figure}[tb]
\centering
\begin{subfigure}[b]{.22\textwidth}
  \centering
  \includegraphics[width=\linewidth]{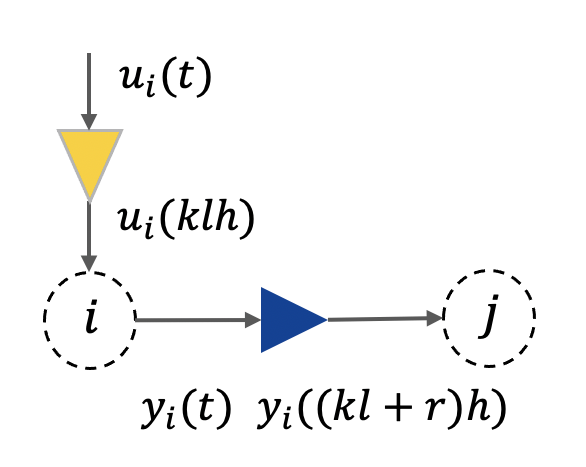}
  \caption{}
  \label{fig:TMS}
\end{subfigure}
\begin{subfigure}[b]{.22\textwidth}
  \centering
  \includegraphics[width=\linewidth]{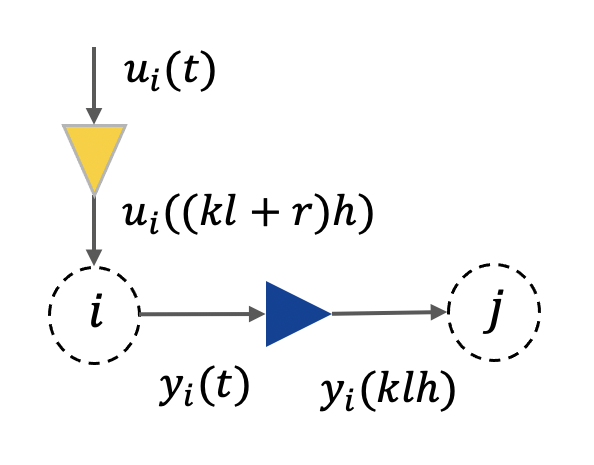} 
  \caption{}
  \label{fig:cms}
\end{subfigure}
\caption{Two elementary types of multi-rate sampling patterns. (a) TMS. (b) CMS.}
\label{fig:multi-rate}
\end{figure}

Based on Lemma \ref{eigenspace_TMS}, the controllability of the networked TMS system can be verified by Corollary \ref{TMS_con}.

\begin{corollary}\label{TMS_con}
The networked TMS system (\ref{TMS},\ref{TMS_detail}) is controllable if and only if the following conditions hold simultaneously:\par
(1) $\forall\theta_i^j\in\sigma(\Phi_s)$, $\theta_i^j\neq{0}$, $\sum_{c=0}^{l-1}{(\theta_i^j)^c}\neq{0},$ for every $i=1,...,r$ and $j=1,...,p_i$.\par
(2) $\forall\eta\in{M({\theta_i^j|\Phi_s}})$ and $\eta\neq{0}$, $\eta(\Delta\otimes\mathcal{B}(h))\neq{0}$, for every $i=1,...,r$ and $j=1,...,p_i$.
\end{corollary}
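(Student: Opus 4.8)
The plan is to apply the PBH rank criterion for DLTI systems, which---as recalled in Section~\ref{sec:pre}---is a \emph{sufficient} condition for controllability: it is enough to show that $\text{rank}[sI_{Nn}-\tilde{\Phi}_s,\tilde{\Psi}_s]=Nn$ for every $s\in\mathbb{C}$. I would argue by contradiction. Suppose the rank drops for some $s$; then there is a nonzero row vector $\eta$ with $\eta(sI_{Nn}-\tilde{\Phi}_s)=0$ and $\eta\tilde{\Psi}_s=0$. The first relation forces $s\in\sigma(\tilde{\Phi}_s)$ and $\eta\in M(s|\tilde{\Phi}_s)$, so by Lemma~\ref{eigenspace_TMS} we may write $s=\theta=(\theta_{i_1}^{j_1})^l=\cdots=(\theta_{i_q}^{j_q})^l$ for suitable indices with $q\geq1$, and $M(\theta|\tilde{\Phi}_s)=\odot_{k=1}^{q}V(\theta_{i_k}^{j_k})$. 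Recall from Lemma~\ref{chara} that each $V(\theta_{i_k}^{j_k})$ consists of left eigenvectors of $\Phi_s$ for the eigenvalue $\theta_{i_k}^{j_k}$; hence $\eta=\sum_{k=1}^{q}\tilde{\eta}_k$ with $\tilde{\eta}_k\in V(\theta_{i_k}^{j_k})$ and $\tilde{\eta}_k\Phi_s=\theta_{i_k}^{j_k}\tilde{\eta}_k$, and the $\tilde{\eta}_k$ are not all zero.

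The key step is then a short computation. Writing $\mu_k\triangleq\sum_{c=0}^{l-1}(\theta_{i_k}^{j_k})^{c}$ and $\eta'\triangleq\sum_{k=1}^{q}\mu_k\tilde{\eta}_k$, the eigen-relation $\tilde{\eta}_k\Phi_s^{c}=(\theta_{i_k}^{j_k})^{c}\tilde{\eta}_k$ together with $\tilde{\Psi}_s=(\Phi_s^{l-1}+\cdots+\Phi_s+I)\Psi_s$ gives
\begin{equation*}
\eta\tilde{\Psi}_s=\sum_{k=1}^{q}\tilde{\eta}_k\Big(\sum_{c=0}^{l-1}\Phi_s^{c}\Big)\Psi_s=\sum_{k=1}^{q}\mu_k\,\tilde{\eta}_k\Psi_s=\eta'\Psi_s .
\end{equation*}
Condition~(1) is exactly the assertion that $\mu_k\neq0$ for every $k$, so $\mu_k\tilde{\eta}_k\in V(\theta_{i_k}^{j_k})$ is nonzero whenever $\tilde{\eta}_k\neq0$. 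Since the sum $\odot_{k=1}^{q}V(\theta_{i_k}^{j_k})$ is direct and $\eta\neq0$, at least one $\tilde{\eta}_k\neq0$, so $\eta'\neq0$; and manifestly $\eta'\in\odot_{k=1}^{q}V(\theta_{i_k}^{j_k})=M(\theta|\tilde{\Phi}_s)$.

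Finally, condition~(2) applied to $\eta'$ gives $\eta'\Psi_s=\eta'(\Delta\otimes\mathcal{B}(h))\neq0$, i.e.\ $\eta\tilde{\Psi}_s\neq0$, which contradicts $\eta\tilde{\Psi}_s=0$. Hence $[sI_{Nn}-\tilde{\Phi}_s,\tilde{\Psi}_s]$ has full row rank for every $s$, and the networked TMS system (\ref{TMS},\ref{TMS_detail}) is controllable. I expect the only delicate points to be bookkeeping rather than analysis: one must lean on Lemmas~\ref{chara} and~\ref{eigenspace_TMS} to know that an element of $M(\theta|\tilde{\Phi}_s)$ decomposes into genuine left eigenvectors of $\Phi_s$, so that $\Phi_s^{c}$ acts on each piece as the scalar $(\theta_{i_k}^{j_k})^{c}$, and one must track the direct-sum structure carefully in the case $q>1$---condition~(1) being precisely what neutralizes the scalar factors $\mu_k$ and reduces the claim to condition~(2). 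As with Theorem~\ref{thm_Add2}, the resulting criterion is only sufficient, since the PBH rank test is not necessary for controllability of DLTI systems.
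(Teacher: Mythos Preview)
Your proof is correct and follows essentially the same approach as the paper: both use the eigenvector form of the PBH criterion, invoke Lemma~\ref{eigenspace_TMS} to decompose $M(\theta|\tilde{\Phi}_s)=\odot_{k=1}^{q}V(\theta_{i_k}^{j_k})$, exploit that $\Phi_s$ acts on each $V(\theta_{i_k}^{j_k})$ as the scalar $\theta_{i_k}^{j_k}$ so that $\eta\tilde{\Psi}_s=\sum_k\mu_k\tilde{\eta}_k\Psi_s$, and then use condition~(1) to ensure the $\mu_k$ are nonzero so that condition~(2) finishes the job. Your contradiction framing and the explicit introduction of $\eta'=\sum_k\mu_k\tilde{\eta}_k$ make the direct-sum bookkeeping slightly cleaner than in the paper, but the underlying argument is the same.
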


\begin{proof}
Corollary \ref{TMS_con} can be proved based on Theorem \ref{thm-main} and Lemma \ref{eigenspace_TMS}.
According to Lemma \ref{dlti}, system (\ref{TMS},\ref{TMS_detail}) is controllable if and only if $\forall\theta\in\sigma(\tilde{\Phi}_s)$, $\theta\neq{0}$, any eigenvector $\eta$ of $\tilde{\Phi}_s$ about $\theta$ satisfies $\eta\tilde{\Psi}_s\neq{0}$.
The above condition holds if and only if $\forall\theta_i^j\in\sigma(\Phi_s)$ and $\forall\eta\in{M(\theta_i^j|\Phi_s)}$, $\eta\tilde{\Psi}_s=\eta(\Phi_s^{l-1}+\Phi_s^{l-2}+...+\Phi_s+I_{Nn})\Psi_s=((\theta_i^j)^{l-1}+(\theta_i^j)^{l-2}+...+\theta_i^j+1)\eta(\Delta\otimes{\mathcal{B}(h)})\neq{0}$, i.e., condition (1) and (2) in Corollary \ref{TMS_con} hold simultaneously.\par
In addition, consider the case that $(\theta_{i_1}^{j_1})^l=(\theta_{i_2}^{j_2})^l=...=(\theta_{i_q}^{j_q})^l=\theta$, where $q>1$, $i_k\in\{1,2,...,r\}$, $j_k\in\{1,2,...,p_{i_k}\}$, $k=1,2,...,q$.
The controllability condition requires that $\forall\eta=a_1\eta_1+a_2\eta_2+...+a_q\eta_q$, with $[a_1,...,a_q]\neq{0}$, $a_k\in\mathbb{C}$, $\eta_k\in{M(\theta_{i_k}^{j_k}|\Phi_s)}$, $k=1,...,q$, satisfies
$\eta\tilde{\Psi}_s=\sum_{k=1}^qa_k\sum_{c=0}^{l-1}(\theta_{i_k}^{j_k})^c\eta_k(\Delta\otimes{\mathcal{B}(h)})\neq{0}.$
Since $\sum_{c=0}^{l-1}(\theta_{i_k}^{j_k})^c\neq{0}$ according to condition (1), $a_k\sum_{c=0}^{l-1}(\theta_{i_k}^{j_k})^c$ can be an arbitrary complex number, $k=1,...,q$.
It follows that $\sum_{k=1}^qa_k\sum_{c=0}^{l-1}(\theta_{i_k}^{j_k})^c\eta_k$ can be any vector in the eigenspace $\odot_{k=1}^q{M(\theta_{i_k}^{j_k}|\Phi_s)}$.
So far the proof of Corollary \ref{TMS_con} is complete.
\end{proof}

\subsection{Networked CMS systems}

Here, a networked CMS system means that all nodes are sampled periodically at the same time on both transmission channels and control channels, but the transmission sampling period is an integer multiple of the control sampling period.
As shown in Fig.\ref{fig:cms}, in the CMS pattern, $h$ is the control sampling period and $lh$ is the transmission sampling period, with $l$ being a positive integer.\par
The dynamics of the node in the networked CMS system are described as:
\begin{equation}
\left\{
        \begin{array}{l}
        \dot{x}_i(t)=Ax_i(t)+\sum_{j=1}^N{w_{ij}Hy_j(klh})+\delta_iBu_i((kl+r)h)\\
        y_i(klh)=Cx_i(klh)\\
        \end{array}
\right.
\end{equation}
where $t\in((kl+r)h,(kl+r+1)h]$, $r=0,1,...,l-1$ and $k\in\mathbb{N}$. 
The corresponding compact form can be written as
\begin{equation}\label{CMS_1}
    X((kl+r+1)h)=\Lambda{X}((kl+r)h)+\Gamma{X}(klh)+\Psi_s{U((kl+r)h)}\\
\end{equation}
with
\begin{equation}\label{CMS_2}
    \Lambda=I_N\otimes{e^{Ah}}, \Gamma=W\otimes\mathcal{H}(h).
\end{equation}\par

Furthermore, denote $\hat{U}(klh)=[U^\top(klh),U^\top((kl+1)h),$ ..., $U^\top((kl+l-1)h)]^\top$, then system (\ref{CMS_1},\ref{CMS_2}) can be transformed into a DLTI system with a sampling period $lh$:
\begin{equation}\label{CMS}
    X((k+1)lh)=\hat{\Phi}_s{X}(klh)+\hat{\Psi}_s{\hat{U}(klh)}\\
\end{equation}
with
\begin{equation}\label{CMS_detail}
\begin{aligned}
 \hat{\Phi}_s&=I_N\otimes{e^{Alh}}+W\otimes\mathcal{H}(lh),\\
    \hat{\Psi}_s&=[\Delta\otimes{e^{A(l-1)h}}\mathcal{B}(h),...,\Delta\otimes{e^{Ah}}\mathcal{B}(h),\Delta\otimes\mathcal{B}(h)].
\end{aligned}
\end{equation}\par

Based on Theorem \ref{thm2}, a sufficient condition for the controllability of the networked CMS system can be derived.
Denote $\hat{E}_i=e^{Alh}+\lambda_i\mathcal{H}(lh)$, $\sigma(\hat{E}_i)=\{\hat{\theta}_i^1,...,\hat{\theta}_i^{p_i}\}$, $i=1,2,...,r$.

\begin{corollary}\label{CMS_con}
The networked CMS system (\ref{CMS},\ref{CMS_detail}) is controllable if and only if $ \forall\eta\in{M(\hat{\theta}_i^j|\hat{\Phi}_s)}$, $\hat{\theta}_i^j\neq{0}$ and $\eta\neq{0}$, $\eta[\Delta\otimes{e^{A(l-1)h}}\mathcal{B}(h),...,\Delta\otimes{e^{Ah}}\mathcal{B}(h),\Delta\otimes\mathcal{B}(h)]\neq{0}$, for every $i=1,...,r$ and $j=1,...,p_i$.\par
\end{corollary}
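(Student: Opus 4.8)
The plan is to notice that the lifted state matrix $\hat\Phi_s=I_N\otimes e^{Alh}+W\otimes\mathcal{H}(lh)$ in (\ref{CMS},\ref{CMS_detail}) has exactly the structural form treated in Lemma \ref{thm1} and Lemma \ref{chara}, only with the sampling period $h$ replaced throughout by $lh$. So first I would apply Lemma \ref{thm1} verbatim, with $\hat E_i=e^{Alh}+\lambda_i\mathcal{H}(lh)$ playing the role of $E_i$, to conclude $\sigma(\hat\Phi_s)=\bigcup_{i=1}^r\sigma(\hat E_i)$; the eigenvalue written $\theta_i^j$ in the statement is then $\hat\theta_i^j\in\sigma(\hat E_i)$. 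Next I would apply Lemma \ref{chara}, again with $\mathcal{H}(lh)$ in place of $\mathcal{H}(h)$, to obtain an explicit spanning set of each eigenspace $M(\hat\theta_i^j|\hat\Phi_s)$ built from the left Jordan chains of $W$ and the generalized left Jordan chains of $\hat E_i$ about $\mathcal{H}(lh)$, including the direct-sum description $M(\theta|\hat\Phi_s)=\odot_k V(\hat\theta_{i_k}^{j_k})$ when the same value $\theta$ is shared by several $\hat E_{i_k}$.

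The second step is the PBH-type argument, identical in form to the proof of Theorem \ref{thm_Add2}. By the rank criterion for DLTI systems (\cite{hespanha2018linear}), system (\ref{CMS},\ref{CMS_detail}) is controllable if $\mathrm{rank}[sI_{Nn}-\hat\Phi_s,\hat\Psi_s]=Nn$ for every $s\in\mathbb{C}$. For $s\notin\sigma(\hat\Phi_s)$ this holds automatically since $sI_{Nn}-\hat\Phi_s$ is then nonsingular, so only $s=\hat\theta_i^j$ need be checked; the rank drops at such an $s$ precisely when there is a nonzero left eigenvector $\eta\in M(\hat\theta_i^j|\hat\Phi_s)$ with $\eta\hat\Psi_s=0$, and the hypothesis of the corollary forbids exactly this. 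The common-eigenvalue case requires no extra work because Lemma \ref{chara} already packages all left eigenvectors at a shared eigenvalue — including every linear combination across the constituent $\hat E_{i_k}$ — into $M(\theta_i^j|\hat\Phi_s)$, over which the hypothesis quantifies.

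The one genuinely new piece of bookkeeping, and the step I expect to demand the most care, is that here $\hat\Psi_s$ is not a single Kronecker product $\Delta\otimes\mathcal{B}(\cdot)$ but the block row $[\Delta\otimes e^{A(l-1)h}\mathcal{B}(h),\dots,\Delta\otimes e^{Ah}\mathcal{B}(h),\Delta\otimes\mathcal{B}(h)]$ coming from the $l$ intra-period control updates, so the test ``$\eta\hat\Psi_s\neq 0$'' must be read against this whole concatenation rather than against $\mathcal{B}(h)$ alone; substituting the eigenspace basis from Lemma \ref{chara}, each block contributes a term $\big(\sum_k v_i^{\,\cdot}\otimes\xi_{ij}^{\,\cdot}\big)\big(\Delta\otimes e^{Ash}\mathcal{B}(h)\big)$ for $s=0,\dots,l-1$, and controllability follows as soon as the tuple of these terms is not the zero vector. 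Finally, mirroring Remark \ref{nonsin}, I would note that the criterion is also necessary — hence sufficient and necessary — when $\hat\Phi_s$ is nonsingular, i.e. when $0\notin\sigma(\hat E_i)$ for every $i$, since then the reachability and controllability subspaces of the DLTI system (\ref{CMS},\ref{CMS_detail}) coincide.
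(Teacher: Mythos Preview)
Your proposal is correct and follows essentially the same approach as the paper: observe that $\hat\Phi_s$ has exactly the form of $\Phi_s$ with $h$ replaced by $lh$, invoke Lemmas \ref{thm1}--\ref{chara} accordingly, and then run the PBH-type argument of Theorem \ref{thm_Add2} against the block-row input matrix $\hat\Psi_s$. The paper compresses all of this into two sentences, whereas you spell out the eigenspace bookkeeping and the common-eigenvalue case explicitly, but the logical content is the same.
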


\begin{proof}
The expression of $\hat{\Phi}_s$ is in the same form as $\Phi_s$, except that the sampling period of $\hat{\Phi}_s$ is $l$ times that of $\Phi_s$.
Therefore, Corollary \ref{CMS_con} can be proved based on Theorem \ref{lem1}.
\end{proof}

The above corollaries show that the controllability conditions of networked multi-rate sampled-data systems can be simplified based on the single-rate system (\ref{MIMO_sampled_1},\ref{MIMO_sampled_2}). 
As a result, verifying a networked multi-rate sampled-data system by Corollary \ref{TMS_con} or Corollary \ref{CMS_con} greatly reduces the computational complexity than using the PBH rank condition directly.
However, it has not been proved that the controllability of system (\ref{MIMO_sampled_1},\ref{MIMO_sampled_2}) can ensure that system (\ref{TMS},\ref{TMS_detail}) or system (\ref{CMS},\ref{CMS_detail}) is controllable, and vice versa.
But the above analysis has shed light on increasing the controllability of networked sampled-data systems by adjusting the multiple relationships between the sampling periods on control channels and transmission channels.

\section{Conclusion}\label{sec:con}
The controllability of networked sampled-data systems is investigated.
The sampling is periodic on the transmission and control channels, with single- and multi-rate patterns considered, respectively.
Necessary or/and sufficient controllability conditions are developed, indicating that the controllability of networked sampled-data systems is jointly determined by the external inputs, network topology, inner couplings, node dynamics, and the sampling period.
Results show that the pathological sampling of single node systems can be eliminated by an appropriate design of network topology and inner couplings.
However, when the topology matrix is singular, the pathological sampling of single node systems will inevitably lead to the loss of controllability of the whole system.
And for systems with one-dimensional or self-loop-state node dynamics, the sampling will not affect the controllability of the networked systems.
In further studies, we will consider more general systems with heterogeneous node dynamics.
In addition, more complex sampling patterns will be investigated.
More complex and deeper network structures will be studied, and the non-pathological sampling conditions of networked systems will be further explored.

\bibliography{reference}

\begin{thebibliography}{10}

\bibitem{xiang2019advances}
L.~Xiang, F.~Chen, W.~Ren, and G.~Chen, ``Advances in network
  controllability,'' {\em IEEE Circuits Syst. Mag.}, vol.~19, no.~2, pp.~8--32,
  2019.

\bibitem{chen2012optimal}
T.~Chen and B.~A. Francis, {\em Optimal Sampled-data Control Systems}.
\newblock London, U.K.:Springer, 1995.

\bibitem{kalman1962canonical}
R.~E. Kalman, ``Canonical structure of linear dynamical systems,'' {\em Proc.
  Nat. Acad. Sci. USA}, vol.~48, no.~4, p.~596, 1962.

\bibitem{davison1975new}
E.~Davison and S.~Wang, ``New results on the controllability and observability
  of general composite systems,'' {\em IEEE Trans. Autom. Control}, vol.~20,
  no.~1, pp.~123--128, 1975.

\bibitem{gilbert1963controllability}
E.~G. Gilbert, ``Controllability and observability in multivariable control
  systems,'' {\em SIAM J. Control}, vol.~1, no.~2, pp.~128--151, 1963.

\bibitem{kobayashi1978controllability}
H.~Kobayashi, H.~Hanafusa, and T.~Yoshikawa, ``Controllability under
  decentralized information structure,'' {\em IEEE Trans. Autom. Control},
  vol.~23, no.~2, pp.~182--188, 1978.

\bibitem{lin1974structural}
C.-T. Lin, ``Structural controllability,'' {\em IEEE Trans. Autom. Control},
  vol.~19, no.~3, pp.~201--208, 1974.

\bibitem{liu2011controllability}
Y.-Y. Liu, J.-J. Slotine, and A.-L. Barab{\'a}si, ``Controllability of complex
  networks,'' {\em Nature}, vol.~473, no.~7346, pp.~167--173, 2011.

\bibitem{menichetti2014network}
G.~Menichetti, L.~Dall’Asta, and G.~Bianconi, ``Network controllability is
  determined by the density of low in-degree and out-degree nodes,'' {\em Phys.
  Rev. Lett.}, vol.~113, no.~7, p.~078701, 2014.

\bibitem{posfai2013effect}
M.~P{\'o}sfai, Y.-Y. Liu, J.-J. Slotine, and A.-L. Barab{\'a}si, ``Effect of
  correlations on network controllability,'' {\em Sci. Rep.}, vol.~3, no.~1,
  pp.~1--7, 2013.

\bibitem{zhang2017efficient}
X.~Zhang, H.~Wang, and T.~Lv, ``Efficient target control of complex networks
  based on preferential matching,'' {\em PloS One}, vol.~12, no.~4,
  p.~e0175375, 2017.

\bibitem{liu2012control}
Y.-Y. Liu, J.-J. Slotine, and A.-L. Barab{\'a}si, ``Control centrality and
  hierarchical structure in complex networks,'' {\em PloS One}, vol.~7, no.~9,
  p.~e44459, 2012.

\bibitem{jia2013emergence}
T.~Jia, Y.-Y. Liu, E.~Cs{\'o}ka, M.~P{\'o}sfai, J.-J. Slotine, and A.-L.
  Barab{\'a}si, ``Emergence of bimodality in controlling complex networks,''
  {\em Nat. Commun.}, vol.~4, no.~1, pp.~1--6, 2013.

\bibitem{pu2012robustness}
C.-L. Pu, W.-J. Pei, and A.~Michaelson, ``Robustness analysis of network
  controllability,'' {\em Physica A}, vol.~391, no.~18, pp.~4420--4425, 2012.

\bibitem{klickstein2020controllability}
I.~Klickstein and F.~Sorrentino, ``The controllability gramian of lattice
  graphs,'' {\em Automatica}, vol.~114, p.~108833, 2020.

\bibitem{yan2017network}
G.~Yan, P.~E. V{\'e}rtes, E.~K. Towlson, Y.~L. Chew, D.~S. Walker, W.~R.
  Schafer, and A.-L. Barab{\'a}si, ``Network control principles predict neuron
  function in the caenorhabditis elegans connectome,'' {\em Nature}, vol.~550,
  no.~7677, pp.~519--523, 2017.

\bibitem{guo2018novel}
W.-F. Guo, S.-W. Zhang, Q.-Q. Shi, C.-M. Zhang, T.~Zeng, and L.~Chen, ``A novel
  algorithm for finding optimal driver nodes to target control complex networks
  and its applications for drug targets identification,'' {\em BMC Genomics},
  vol.~19, no.~1, pp.~67--79, 2018.

\bibitem{bar1975preservation}
Y.~Bar-Ness and G.~Langholz, ``Preservation of controllability under
  sampling,'' {\em Int. J. Control}, vol.~22, no.~1, pp.~39--47, 1975.

\bibitem{babaali2005nonpathological}
M.~Babaali and M.~Egerstedt, ``Nonpathological sampling of switched linear
  systems,'' {\em IEEE Trans. Autom. Control}, vol.~50, no.~12, pp.~2102--2105,
  2005.

\bibitem{hagiwara1988controllability}
T.~Hagiwara and M.~Araki, ``Controllability indices of sampled-data systems,''
  {\em Int. J. Syst. Sci.}, vol.~19, no.~12, pp.~2449--2457, 1988.

\bibitem{kreisselmeier1999sampling}
G.~Kreisselmeier, ``On sampling without loss of
  observability/controllability,'' {\em IEEE Trans. Autom. Control}, vol.~44,
  no.~5, pp.~1021--1025, 1999.

\bibitem{guo2005systems}
G.~Guo, ``Systems with nonequidistant sampling: {C}ontrollable? {O}bservable?
  {S}table?,'' {\em Asian J. Control}, vol.~7, no.~4, pp.~455--461, 2005.

\bibitem{pasand2018controllability}
M.~M.~S. Pasand and M.~Montazeri, ``Controllability and stabilizability of
  multi-rate sampled data systems,'' {\em Syst. Control Lett.}, vol.~113,
  pp.~27--30, 2018.

\bibitem{zhou2015controllability}
T.~Zhou, ``On the controllability and observability of networked dynamic
  systems,'' {\em Automatica}, vol.~52, pp.~63--75, 2015.

\bibitem{hao2019new}
Y.~Hao, Z.~Duan, G.~Chen, and F.~Wu, ``New controllability conditions for
  networked, identical {LTI} systems,'' {\em IEEE Trans. on Autom. Control},
  vol.~64, no.~10, pp.~4223--4228, 2019.

\bibitem{wang2016controllability}
L.~Wang, G.~Chen, X.~Wang, and W.~K. Tang, ``Controllability of networked
  {MIMO} systems,'' {\em Automatica}, vol.~69, pp.~405--409, 2016.

\bibitem{iudice2019node}
F.~L. Iudice, F.~Sorrentino, and F.~Garofalo, ``On node controllability and
  observability in complex dynamical networks,'' {\em IEEE Control Syst.
  Lett.}, vol.~3, no.~4, pp.~847--852, 2019.

\bibitem{xiang2019controllability}
L.~Xiang, P.~Wang, F.~Chen, and G.~Chen, ``Controllability of directed
  networked {MIMO} systems with heterogeneous dynamics,'' {\em IEEE Trans.
  Control Netw. Syst.}, 2019.

\bibitem{wu2020controllability}
J.-N. Wu, X.~Li, and G.~Chen, ``Controllability of deep-coupling dynamical
  networks,'' {\em IEEE Trans. Circuits Syst. I. Regul. Pap.}, vol.~67, no.~12,
  pp.~5211--5222, 2020.

\bibitem{hao2019controllability}
Y.~Hao, Q.~Wang, Z.~Duan, and G.~Chen, ``Controllability of {K}ronecker product
  networks,'' {\em Automatica}, vol.~110, p.~108597, 2019.

\bibitem{ji2014protocols}
Z.~Ji, H.~Lin, and H.~Yu, ``Protocols design and uncontrollable topologies
  construction for multi-agent networks,'' {\em IEEE Trans. Autom. Control},
  vol.~60, no.~3, pp.~781--786, 2014.

\bibitem{ni2013consensus}
W.~Ni, X.~Wang, and C.~Xiong, ``Consensus controllability, observability and
  robust design for leader-following linear multi-agent systems,'' {\em
  Automatica}, vol.~49, no.~7, pp.~2199--2205, 2013.

\bibitem{xiang2013controllability}
L.~Xiang, J.~J. Zhu, F.~Chen, and G.~Chen, ``Controllability of weighted and
  directed networks with nonidentical node dynamics,'' {\em Math. Probl. Eng.},
  vol.~2013, 2013.

\bibitem{lou2012controllability}
Y.~Lou and Y.~Hong, ``Controllability analysis of multi-agent systems with
  directed and weighted interconnection,'' {\em Int. J. Control}, vol.~85,
  no.~10, pp.~1486--1496, 2012.

\bibitem{zhao2020data}
B.~Zhao and Y.~Guan, ``Data-sampling controllability of multi-agent systems,''
  {\em IMA J. Math. Control Inform.}, vol.~37, no.~3, pp.~794--813, 2020.

\bibitem{ji2014controllability}
Z.~Ji, T.~Chen, and H.~Yu, ``Controllability of sampled-data multi-agent
  systems,'' in {\em Proc. 33rd Chin. Control Conf.}, pp.~1534--1539, IEEE,
  2014.

\bibitem{lu2020sampled}
Z.~Lu, Z.~Ji, and Z.~Zhang, ``Sampled-data based structural controllability of
  multi-agent systems with switching topology,'' {\em J. Franklin Inst.},
  vol.~357, no.~15, pp.~10886--10899, 2020.

\bibitem{roman2005advanced}
S.~Roman, S.~Axler, and F.~Gehring, {\em Advanced Linear Algebra}, vol.~3.
\newblock Springer, 2005.

\bibitem{hespanha2018linear}
J.~P. Hespanha, {\em Linear Systems Theory}.
\newblock Princeton university press, 2018.

\bibitem{hao2022target}
Y.~Hao, Q.~Wang, Z.~Duan, and G.~Chen, ``Target controllability of networked
  {LTI} systems,'' {\em IEEE Trans. Netw. Sci. Eng.}, vol.~9, no.~3,
  pp.~1493--1500, 2022.

\end{thebibliography}

\end{document}